\definecolor{darkblue}{rgb}{0.,0.,0.4}
\def\bea{\begin{eqnarray}}
\def\eea{\end{eqnarray}}
\def\nn{\nonumber}
\def\triv{\textsf{inv}}
\def\sgn{\textsf{sgn}}
\def\twod{\textsf{2d}}
\newtheorem{result}{Result}
\newtheorem{definition}{Definition}[section]
\newtheorem{theorem}{Theorem}[section]
\newtheorem{lemma}[theorem]{Lemma} 
\begin{document}

\title{Symmetry enforced entanglement in maximally mixed states}

\author{Amin Moharramipour}
\affiliation{Department of Physics, University of Toronto,
60 St. George Street, Toronto, Ontario, M5S 1A7, Canada}
\affiliation{%
 Perimeter Institute for Theoretical Physics, Waterloo, Ontario N2L 2Y5, Canada}
 
\author{Leonardo A. Lessa}

\affiliation{%
 Perimeter Institute for Theoretical Physics, Waterloo, Ontario N2L 2Y5, Canada}

 \affiliation{Department of Physics and Astronomy, University of Waterloo, Waterloo, Ontario N2L 3G1, Canada}

\author{Chong Wang}
\author{Timothy H. Hsieh}
\author{Subhayan Sahu}
\email{ssahu@perimeterinstitute.ca}
\affiliation{%
 Perimeter Institute for Theoretical Physics, Waterloo, Ontario N2L 2Y5, Canada}%
\begin{abstract}

    Entanglement in quantum many-body systems is typically fragile to interactions with the environment. Generic unital quantum channels, for example, have the maximally mixed state with no entanglement as their unique steady state. However, we find that for a unital quantum channel that is `strongly symmetric', i.e. it preserves a global on-site symmetry, the maximally mixed steady state in certain symmetry sectors can be highly entangled. For a given symmetry, we analyze the entanglement and correlations of the maximally mixed state in the invariant sector (MMIS), and show that the entanglement of formation and distillation are exactly computable and equal for any bipartition.  For all Abelian symmetries, the MMIS is separable, and for all non-Abelian symmetries, the MMIS is entangled. Remarkably, for non-Abelian continuous symmetries described by compact semisimple Lie groups (e.g. $SU(2)$), the bipartite entanglement of formation for the MMIS scales logarithmically $\sim \log N$ with the number of qudits $N$.

\end{abstract}

\maketitle
\tableofcontents

\section{Introduction}


Entanglement in quantum many-body systems is a powerful resource for quantum information processing. Entanglement also provides a useful measure of quantum correlations in a system, and is essential for diagnosing quantum phases of matter and the phase transitions between them. However, entanglement is generically believed to be fragile to external noise (i.e. when connected to an external bath), except at sufficiently low temperatures in equilibrium, or by active steering and non-equilibrium processes.

\begin{figure*}
    \centering
    \def\svgwidth{\textwidth}
   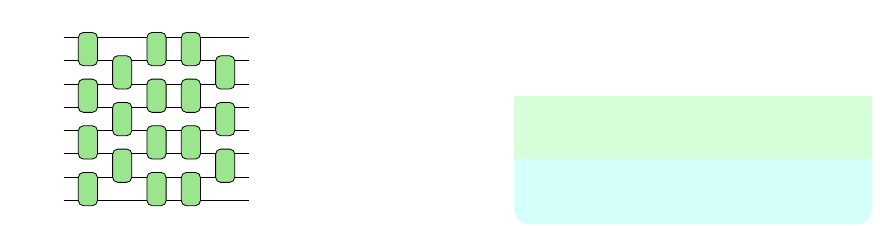
    \caption{ (a) 
    \textbf{Setting}: An example of a symmetric channel obeying conditions of Result \ref{res:SS} that acts on a symmetric product-like state. Each element of the brickwork circuit is a local channel which preserves a global symmetry (for example, singlet-triplet measurement channel for $SU(2)$ on a qubit chain). We show that the steady state of such a channel is the maximally mixed invariant state (MMIS) $\rho_\triv$. We highlight that the results listed here do not depend on the channel being brickwork; any unital symmetric channel that produces the maximally mixed state within a particular symmetry sector will lead to the same behavior of entanglement. (b) \textbf{Summary of results} (Sec. \ref{sec:main_results}): We first prove \textbf{Result \ref{res:SS}}, that the steady state of such evolution is the maximally mixed state $\rho_J$ in the same symmetry sector $J$. If the initial state transforms trivially under the symmetry, such as the neighbouring singlets illustrated for $SU(2)$ (See Sec.~\ref{sec:warmup}), then the steady-state is the maximally mixed invariant state $\rho_\triv$, whose properties are explored in Results~\ref{res:Ent}-\ref{res:time}. In \textbf{Result~\ref{res:Ent}}, $E^D$ and $E^F$ mean entanglement of distillation and formation respectively, $E$ is any other \textit{good} entanglement measure (See Sec. \ref{sec:entanglement_measures}), and the equality holds for the MMIS with respect to any bipartition. In \textbf{Result~\ref{res:exp_values}}, region $A$ is assumed to have size $|A| = O(1)$ not scaling with $N$, and $\mathbb{I}_A$ means the maximally mixed state in $A$.  For \textbf{Results~\ref{res:EBC}} and \textbf{\ref{res:time}}, $(\dagger)$ refer to compact semisimple non-Abelian Lie groups, and $(*)$ entanglement measures include formation, distillation, and logarithmic negativity. Here we quote bipartite entanglement of the MMIS in the thermodynamic limit $N\to\infty$ (Result \ref{res:EBC}), and the times $T$ and $T_{\text{adap}} \leq T$ required by local and adaptive (with measurements and feedback) channels to prepare such a state, respectively (Result \ref{res:time}).}
    \label{fig:firstfig}
\end{figure*}

For instance, when the system is weakly connected to a bath at finite temperature, the stationary state at equilibrium is typically the Gibbs state corresponding to the system Hamiltonian. Long-range quantum entanglement can survive in the Gibbs state at sufficiently low temperatures for some geometrically non-local Hamiltonians with enhanced quantum error-correcting properties~\cite{freedman_quantum_2014}, and topological orders in four and higher dimensions~\cite{dennis_topological_2002, alicki_thermal_2008}. However, at high temperatures, Gibbs states of local Hamiltonians become unentangled~\cite{bakshi_high-temperature_2024}. In general, for characterising mixed state entanglement, one has to establish the non separability of the mixed state which is generally hard.

In the active non-equilibrium setting, however, open quantum systems can be designed such that the steady states are entangled. In fact, for any pure quantum state, which may be highly entangled, one can design a noise channel which produces it as a stationary state~\cite{verstraete_quantum_2009, kraus_preparation_2008}. More generally, interesting and non trivial mixed quantum states can also be prepared by active feedback and measurements~\cite{lu_mixed-state_2023, lee2022decoding, zhu2022nishimoris, LessaChengWang}. Several works ~\cite{bao2023mixedstate, fan2024diagnostics, lee2023quantum, Zou_2023} have also identified that long-range entanglement also survives in the mixed state generated by the application of short depth local noise channels on topologically ordered  or quantum critical states. 

In this work, we investigate the interplay between symmetry and entanglement in states that are maximally mixed in a given symmetry sector. Symmetries are manifested in mixed states in two inequivalent ways~\cite{Buca_2012, Albert_2014, albert2018lindbladians}: $\rho$ is said to have a strong symmetry (also called exact symmetry) if it is a mixture of pure states carrying the same charge under the symmetry. Alternatively, the symmetry can be weak (or average), and not strong, such that it is a mixture of states with well-defined, but not equal, charges. 

We find that quantum channels that are strongly symmetric with an on-site non-Abelian symmetry have steady states (within certain symmetry sectors) that are generically entangled. Recently, it was identified that strong anomalous symmetry, whether zero-form~\cite{LessaChengWang,WangLi2024} or higher-form~\cite{ ellison2024classification, sohal2024noisy,wang2024intrinsic,LiLeeYoshida,
Lessatoappear}, guarantees long-range entanglement in pure or mixed states. Similarly, we find symmetry-enforced entanglement in the maximally mixed symmetric states, but, in contrast to these works, our results demonstrate how strong on-site zero-form non-Abelian symmetry can constrain not only the structure but also the amount (e.g. $\log N$ vs. $O(1)$) of entanglement in otherwise maximally mixed states.



For generic unital quantum channels (i.e channels for which the identity operator is a stationary state), such as those generated by measurements without nonlocal feedback, the stationary state is not expected to be long-range entangled. In the presence of just passive measurements, the state is effectively dephased, giving rise to a classical mixture of trajectories which generally do not possess genuine mixed state entanglement. This is the motivation behind active error correction in quantum error correction protocol, where measurements are followed by active feedback. A similar issue arises in the study of measurement-induced phase transitions~\cite{li_measurement-driven_2019, Skinner_2019}. The entanglement phase transition in unitary and measurement dynamics is only visible in individual trajectories, and not in the mixed state described by the mixture of the trajectories. In fact, in the absence of symmetries, the steady state is typically the featureless maximally mixed `identity' state, which is completely separable with no classical correlations as well. 

To motivate the main results, we first describe a simple numerical study of a measurement channel that leads to highly entangled stationary mixed state, without any active feedback. 


\subsection{A warmup example}\label{sec:warmup}

Consider a chain of even number $N$ of spin-$1/2$ qubits, initialized in a dimer state of nearest neighbor singlets (assuming periodic boundary condition), $\otimes_{i = \text{even}}\ket{s}_{i,i+1}$, where $\ket{s}_{i,i+1} = \frac{1}{\sqrt{2}}\left(\ket{0_{i}1_{i+1}}-\ket{1_{i}0_{i+1}}\right)$. This state has total spin $0$, and is thus a singlet under the global $SU(2)$ symmetry of the spin chain. We perform measurements on this state in a brickwork fashion: at every elementary step one measures the pair of nearest neighbor qubits in the singlet-triplet basis. The local projective  measurement is described by the projector on the singlet, $P_{s,i} = \ket{s}_{i,i+1}\bra{s}_{i,i+1}$, and the triplets, $P_{t,i} = \mathbb{I}-P_{s,i}$ (See Fig. \ref{fig:firstfig}(a)). After each elementary measurement step, the density matrix describing the whole state undergoes evolution under the measurement channel, $\rho \to P_{s}\rho P_{s}^{\dagger}+P_{t}\rho P_{t}^{\dagger}$. One time step of such a circuit is when all the nearest neighbors have been measured once. We are interested in characterising the entanglement of the mixed state as it evolves. We calculate the logarithmic negativity, a computable measure of mixed state entanglement ~\cite{vidal_computable_2002}, of half of the spin chain. The logarithmic negativity for a bipartition $A:B$ is defined as, $E^{\mathcal{N}}_{A:B}(\rho) = \log || \rho^{T_{A}}||_{1}$, where $T_{A}$ refers to the partial transposition with respect to region $A$, and $|| A ||_{1} \equiv \Tr \sqrt{A^\dagger A}$ . 

\begin{figure}
    \centering
    \includegraphics[width = 0.9\columnwidth]{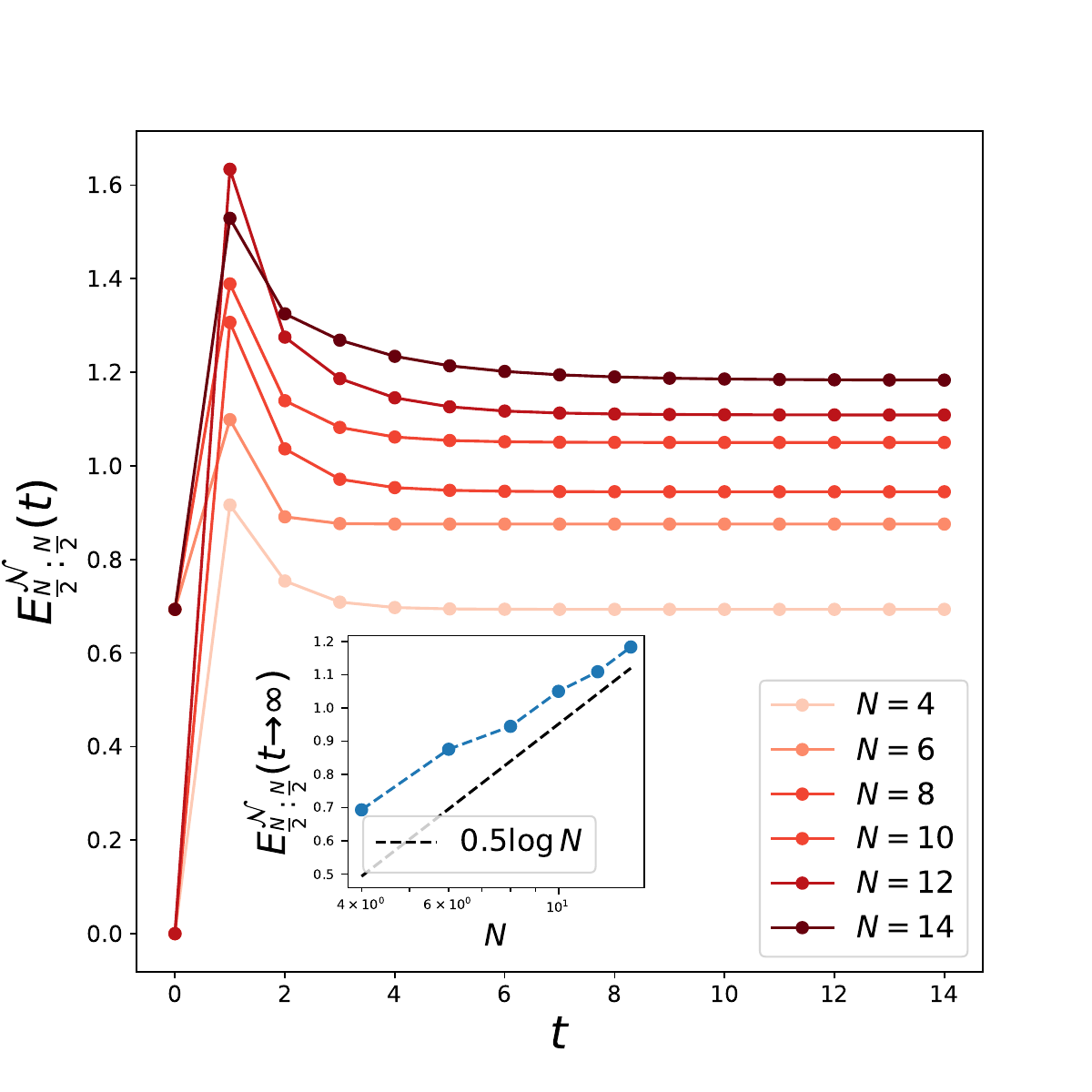}
    \caption{Evolution of half chain negativity for a qubit chain with un-monitored singlet-triplet measurement channel on nearest neighbor qubits, where the initial state is initialized in pairs of nearest singlet. Inset: The steady state half system logarithmic negativity grows logarithmically with system size, as $\sim 0.5 \log N$.} 
    

    \label{fig:intro}
\end{figure}

In Fig.~\ref{fig:intro}, we find that the half chain logarithmic negativity eventually reaches a steady state value with increasing number of measurement layers, and settles to a value which increases as the system size is increased. 


The exact diagonalization numerics are limited to small system sizes; however, we will later rigorously prove that the asymptotic scaling of the bipartite entanglement of formation~\cite{bennett_mixed_1996}, and the  logarithmic negativity for the half chain bipartition, scales as $0.5\log{N}$ in the thermodynamic limit. In this example, the initial state is in a specific spin sector of $SU(2)$, namely the invariant sector, $\{ \ket{\psi} \mid U_{g}\ket{\psi} = \ket{\psi} \}$. The Kraus operators of the channel, $P_{s,i}$ and $P_{t,i}$, also commute with $U_{g}$ for all $g \in SU(2)$. This is defined as strong symmetry of quantum channels~\cite{Buca_2012, Albert_2014,de_Groot_2022}, as it preserves the strong symmetry of the states~\cite{de_Groot_2022}. Building on these ideas, we are able to prove several general results about the entanglement structure of the steady states of such strongly symmetric quantum channels. The results do not depend on the channel being brickwork; any unital symmetric channel that produces the maximally mixed state within a particular symmetry sector will lead to the same behavior of entanglement, which we summarize next.



\subsection{Summary of main results}
\label{sec:main_results}

First, we characterise the sufficient conditions when there is a unique stationary state for strongly symmetric channels.

\begin{result}\label{res:SS} (Sec.~\ref{sec:channel_thm}) For a strongly symmetric channel which is algebraically complete, there is a unique stationary state in every diagonal symmetry sector of the mixed states. Furthermore, if the channel is unital, the maximally mixed state in the symmetry sector is the unique steady state.\end{result}

By algebraically complete, we mean that the Kraus operators corresponding to the channel and their products and linear combinations can generate all symmetric operators. Diagonal sector refers to those mixed state density matrices whose eigenvectors are also eigenvectors of the complete set of commuting operators (CSCO) of the symmetry group. Unital channels refer to channels which preserve the identity matrix, such as projective measurement channels.


Next, we characterise the entanglement of such maximally mixed symmetric states. Two operationally meaningful notions of bipartite entanglement in mixed states are: entanglement of formation~\cite{bennett_mixed_1996} (related to the minimum number of Bell pairs necessary to prepare the state by local operations and classical communication, i.e. LOCC), and the entanglement of distillation~\cite{bennett_purification_1996} (the maximum number of Bell pairs that can be distilled from the state by LOCC, in the asymptotic limit of number of samples)~\footnote{For pure states, these two measures are equal to the entanglement entropy~\cite{Bennett_1996Concentrating}.}. The entanglement of formation is generally a hard quantity to compute even for few-body states.

Logarithmic negativity~\cite{vidal_computable_2002} is an easily computable measure of entanglement in mixed states, and an upper-bound to the distillable entanglement. Logarithmic negativity measures the violation of the positive partial transpose (PPT) condition - a necessary but generally not sufficient condition for separable states~\cite{Peres_1996, Horodecki_1996, Horodecki_1998MixedState}. 
Note, on the other hand, that the entanglement of formation is a faithful measure of quantum entanglement, and hence it being zero/ non-zero implies that the state is necessarily separable/ entangled.


One central result of our work is that the entanglement of formation can be exactly computed for steady states of strongly symmetric channels in the invariant sector, and furthermore, that there is a symmetry-based classification of the amount of entanglement in such states.

We generalize the maximally mixed state in the spin $0$ sector for the $SU(2)$ spin chain to any symmetry, dubbed the maximally mixed invariant state (MMIS), referring to the equal mixture of all pure states in the invariant (equivalently, trivial or singlet) sector of the symmetry group. To be concrete, we are considering such states in the tensor product Hilbert space of $N$ qudits, where the local $d$ dimensional qudits carries an irreducible representation of the symmetry group. We are interested in studying the entanglement property of such mixed states not only due to their ubiquity as steady states of symmetric channels, but also because they reveal intrinsic properties of the symmetry action, as this is the only information they contain. We show the following result:

\begin{result}\label{res:Ent} (Sec.~\ref{sec:entanglement_equality}) The  maximally mixed invariant state has the property that for any real space bipartition, the entanglement of distillation  and the entanglement of formation  are equal.\end{result}

The equality of the entanglement of formation and the entanglement of distillation follows from the fact that such mixed states are ensembles of pure states that are distinguishable by LOCC, without destroying quantum entanglement~\cite{horodecki_entanglement_1998,livine_entanglement_2005}.


Furthermore, this result also provides an optimal decomposition of the mixed state that determines the entanglement of formation, as well as the optimal strategy for distilling Bell pairs from the mixed state. This remarkable property allow us to derive a closed form expression for the bipartite entanglement for the maximally mixed invariant state. 

Using the expressions for bipartite entanglement measure, we obtain the following classification of the entanglement structure for different symmetries,

\begin{result}\label{res:EBC} The maximally mixed invariant state (MMIS) has the following entanglement based classification (Sec.~\ref{sec:entanglement_classification}):
\begin{itemize}
    \item{The MMIS  for any Abelian (discrete or continuous) symmetries, such as $\mathbb{Z}_{2}$ or $U(1)$, is fully separable.}
    \item{The MMIS for any non-Abelian symmetries is entangled, i.e. non-separable.\footnote{For certain symmetry groups, the MMIS might be separable for some fine-tuned bipartitions $A:B$. See Sec.~\ref{sec:entanglement_classification} for more details.} Furthermore,}
    \begin{itemize}
        \item{for finite discrete\footnote{All discrete groups treated here are finite.} non-Abelian symmetries, such as the permutation group $S_{3}$, the bipartite entanglement for any bipartition is non-zero, but finite $\sim O(1)$ in the thermodynamic limit, $N\to \infty$ (Sec.~\ref{sec:s3_entanglement}).}
        \item{for symmetries described by compact semisimple Lie groups, such as $SU(M)$ or $SO(M)$, the bipartite entanglement of half of the system is $\sim\log N$ in the thermodynamic limit $N\to \infty$. More precisely, we provide a general formula for half system entanglement of formation for any such groups in terms of the dimension ($\dim \mathfrak{g}$) and rank ($\dim \mathfrak{h}$) of its Lie algebra, $\frac{\dim \mathfrak{g}-\dim \mathfrak{h}}{4}\log N + O(1)$ (Sec.~\ref{sec:lie_mmis}).}
    \end{itemize}
\end{itemize}
\end{result}

All the results are independent of spatial dimension, since all MMISs are invariant under permutations of the sites (See Lemma \ref{lemma:permutation_invariance}). 

These results show that the steady state that we obtained in the numerical experiment in Fig.~\ref{fig:intro} for the $SU(2)$ symmetric channel is indeed a genuinely long-range entangled mixed state. Here, by long-range entangled (LRE) mixed state, we mean that the state cannot be decomposed into a classical mixture of states which are all short-range entangled, i.e. SRE (product states with finite depth local unitaries acted on them). Furthermore, the distillation protocol in Result~\ref{res:Ent} provides an explicit strategy to utilise this entanglement. This provides an example where the imposition of strong symmetry in an otherwise entropy-increasing unital quantum channel gives rise to a genuinely long-range entangled steady state with logarithmically large bipartite entanglement. We can also compute other probes of entanglement for such mixed states, namely logarithmic negativity and operator entanglement~\cite{cirac_MPDO_first}, the latter being connected to MPO representations. 

While the MMIS for non-Abelian symmetries are globally entangled, the global constraint is not revealed in any local experiment. This intuition is confirmed in the following result,

\begin{result}\label{res:exp_values}
    The $k$-site reduced density matrix of the MMIS converges to the identity state in the thermodynamic limit $N\to \infty$, when $k$ is constant (Sec.~\ref{sec:exp_values}).
\end{result}

Despite the apparent triviality of local reduced density matrices, for finite systems ($N < \infty$), the MMIS of compact semisimple Lie groups are long-range correlated, with algebraically decaying correlations $\sim 1/N$ (Sec.~\ref{sec:LRO}). For discrete non-Abelian symmetries, the MMIS is strictly short-range correlated (Secs.~\ref{sec:exp_values},\ref{sec:s3_entanglement}).

The genuine long-range entanglement in the MMIS of compact semisimple Lie groups leads to the question: how easy is it to prepare such state from pure product states? We have the following result that addresses this:

\begin{result}\label{res:time}
The algebraically decaying correlation in the MMIS of compact semisimple Lie groups implies a lower bound of $\Omega(N)$ depth for preparing such states from pure product states using local channels. The entanglement of formation provides a lower bound $\Omega(\log N)$ of the depth necessary for adaptive circuits (local quantum operations and instanteneous classical communication) to prepare such states (Sec.~\ref{sec:preparation}).
\end{result}

One can further ask if such states are stable mixed state `phases of matter'. We discuss in Sec.~\ref{sec:stability_swssb} that the 
MMIS for any symmetry group displays a novel form of symmetry-breaking pattern that is specific to mixed states, namely, strong-to-weak spontaneous symmetry breaking (SW-SSB)~\cite{lessaStrongtoWeakSpontaneousSymmetry2024, salaSpontaneousStrongSymmetry2024}, which are diagnosed by the non-vanishing of the so-called fidelity correlators at large separations. We have the following result:

\begin{result}\label{res:SWSSB} 
The MMIS of compact semisimple Lie groups have strong-to-weak spontaneous symmetry  breaking (Sec.~\ref{sec:stability_swssb}). 
\end{result}

Using the stability theorems of SW-SSB proved in~\cite{lessaStrongtoWeakSpontaneousSymmetry2024}, this result implies that the MMIS of such groups cannot be converted into a symmetric pure product state using short-depth symmetric local channels. 

The MMIS can be thought of as the infinite temperature state within the trivial symmetry sector. We also explore whether such long-range entanglement with $O(\log N)$ entanglement exists at finite temperatures. Using some heuristic arguments in Sec.~\ref{sec:thermal_mmis}, we conjecture that the infinite temperature MMIS forms a universality class distinct from any finite temperature states.




\subsection{Relation to previous works}




Several pioneering works from the early days of quantum information highlighted how entanglement of formation can be computed for states with enhanced symmetries~\cite{Werner_1989, Vollbrecht_2001}. Moreover, the equality of entanglement of distillation and entanglement of formation was established for states that are composed of ensembles of locally orthogonal pure states in~\cite{horodecki_entanglement_1998}. It was also shown in several works relevant to quantum gravity, that such equality of entanglement measures also apply to the maximally mixed state in the singlet sector of $SU(2)$ invariant spin systems~\cite{livine_entanglement_2005, livine_quantum_2006}. In this work, we have generalized these results to the trivial sector of any general symmetry group, uncovering the symmetry-based classification of the entanglement structure of such states in the presence of non-Abelian symmetries.

Several recent works have investigated the entanglement structure of states with non-Abelian symmetries; however, much of these works have focused on pure states. For pure states, it is known that non-Abelian symmetries can enhance their entanglement~\cite{Majidy_2022, bianchi2024nonabelian}, even precluding the possibility of many-body localization~\cite{Potter_2016}. Measurement circuits preserving a global $SU(2)$ symmetry also leads to logarithmically large bipartite entanglement in their \textit{pure state} trajectories as was reported numerically in~\cite{Majidy_2023}. Our result about the logarithmically scaling entanglement of formation of the maximally mixed state in the singlet sector provides an analytical justification for a logarithmic lower bound on the averaged entanglement of the pure state trajectories: since the entanglement of formation minimizes the average entanglement amongst all pure state decompositions of the state (which include measurement trajectories). Such logarithmic enhancement also appears in \textit{pure} steady states with emergent $SU(2)$ symmetries in certain adaptive circuits~\cite{hauser_continuous_2023}. 

The interplay of \textit{mixed} state entanglement structure and non-Abelian symmetry has been treated on the `weak' regime, in which the non-commuting charges are allowed to be transported locally, and conserved globally with the bath. In that context, the existence of non-commuting charges necessitates new approaches towards establishing quantum thermalization to a Gibbs state (see~\cite{Majidy_2023b} for a recent review). In this work, we instead highlight on how `strong' non-Abelian symmetries (where the system does not exchange the non-commuting charges with the environment) lead to long-range entanglement in \textit{mixed} steady states, and are very different from Gibbs states of local Hamiltonians~\cite{Kuwahara_2020}, which become even unentangled at high enough temperatures \cite{bakshi_high-temperature_2024}.

Recently, highly entangled \textit{mixed} steady states have been found to arise in a different context: in~\cite{li_hilbert_2023} it was found that `strong' kinetic constraints in unital quantum channel leads to highly entangled steady states, generalizing Hilbert space fragmentation to open systems. Our work demonstrates that even conventional non-Abelian symmetries can do so. 

Prior work on quantum channels with strong non-Abelian symmetry has focused on the number of steady states~\cite{zhang_stationary_2020}. Sufficient conditions for unique steady states within symmetry sectors have been studied for generic unital quantum channels with strong Abelian symmetry in~\cite{Buca_2012, Yoshida_uniqueness_2024}.  












\section{Symmetries of states and channels}\label{sec:symm_def}

Consider a tensor product space of $N$ degrees of freedom carrying an irreducible representation $\mathcal{V}$ of the group $G$. The total Hilbert space, $\mathcal{H} \equiv \mathcal{V}^{\otimes N}$ decomposes into a direct sum of irreducible representations (irreps),
\begin{align}\label{eq:multiplicity_decomp}
    \mathcal{V}^{\otimes N} = \bigoplus_{J} \mathfrak{C}^{N}_{J} V_{J}.
\end{align}
Here, each irrep $V_{J}$ is labeled by the irrep label $J$. Equivalent irreps $V_{J}$, of dimension $d_J$, appear $\mathfrak{C}^{N}_J$ number of times in the decomposition, refered to as the mutiplicity of the irrep $J$ in the total Hilbert space. An orthonormal basis of $\mathcal{H}$ is labeled as $\ket{J,M,\alpha}$, with $J$ being the irrep label, $M = 1, \cdots, d_{J}$ being the label for a state within an irrep, and $\alpha =  1, \cdots, \mathfrak{C}_{J}^N$ being the multiplicity label. For general non-Abelian symmetries $d_{J} \geq 1$, while for Abelian symmetries, $d_{J} = 1$. 





Any unitary representation of the group element $g \in G$ in $\mathcal{H}$ can be decomposed into its irreps, $U(g) = \bigoplus_{J}U^{J}(g)$, and acts trivially in the space of multiplicities, $U^{J}(g) = \sum_{M,M' = 1}^{d_{J}}\sum_{\alpha, \beta = 1}^{\mathfrak{C}^{N}_{J}}\left[U^{J}(g)\right]_{M,M'}\delta_{\alpha,\beta}\ket{J,M,\alpha}\bra{J,M',\beta}$. A symmetric operator $O$ is defined such that it commutes with (representations of) all group elements, $[O, U(g)] = 0, \: \forall g \in G$. By Schur's lemma~\cite{fulton_representation_2004}, this implies that $O$ acts as identity within each irrep, but non-trivially on the multiplicity space,
\begin{gather}
O = \bigoplus_{J}O^{J}, \text{ with} \nonumber \\ 
O^{J} = \sum_{M,M',\alpha,\beta}\left[O^{j}\right]_{\alpha,\beta}\delta_{M,M'}\ket{J,M,\alpha}\bra{J,M',\beta}. \label{eq:symm_op}
\end{gather} 

\subsection{Symmetric mixed states}

A density matrix $\rho$ can be symmetric under $G$ in two inequivalent ways, namely weak and strong. $\rho$ is weakly symmetric if $U(g) \rho U^{\dagger}(g) = \rho$ for all $g\in G$. 

In this work, we will focus on strongly symmetric mixed states, which are defined as follows:

\begin{definition}$\rho$ is strongly symmetric if $U(g)\rho = \lambda(g)\rho$, for a $U(1)$ valued representation $\lambda(g)$ of $G$. 
\end{definition}
Mixed states with specific symmetry charges $J$ (irrep labels) belong to the vector space of linear operators on the decomposed Hilbert space $\mathcal{H}^{J}$. We can further consider a subspace of $\mathcal{H}^{J}$ with well defined $M$ label spanned by the states $\ket{J,M,\alpha}$ for different $\alpha$. We call this $(J,M)$ sector of the Hilbert space $\mathcal{H}^{JM}$. Analogously, a linear operator in the $(J,M)$ sector, denoted by $\mathcal{B}^{JM}$, is spanned by the orthogonal basis of $\mathfrak{C}_{J}^{2}$ operators $\ket{J,M,\alpha}\bra{J,M,\beta}$.

\subsection{Strongly symmetric quantum channels}

We are interested in characterising  quantum channels $\mathfrak{L}$ that are specified by the Kraus operators $\{K_{a}\}$ (satisfying $\sum_{a}K_{a}^{\dagger}K_{a} = \mathbb{I}$) and acts on density matrices as $\mathfrak{L}(\rho) = \sum_{a}K_{a}\rho K_{a}^{\dagger}$. We first define the notion of strong symmetry for channels: 

\begin{definition} 
\textbf{Strongly symmetric channels} are those channels whose Kraus operators are symmetric, i.e. $[K_{a}, U(g)] = 0$ for all $g\in G$, and thus can be decomposed as Eq.~\eqref{eq:symm_op}. 
\end{definition}

\begin{definition}We  define (algebraically) \textbf{complete strongly symmetric} Kraus operators to be a set of Kraus operators $\{\left[K^{J}_{a}\right]\}$ that can be linearly combined or multiplied to generate any symmetric operator $O^{J}$.
\end{definition}

The adjoint channel $\mathfrak{L}^{\dagger}$ acts on operators as $\mathfrak{L}^{\dagger}(O) = \sum_{a}K_{a}^{\dagger}O K_{a}$. For notation purposes, we introduce the identity channel, $\mathfrak{I}(\rho) = \rho$.  It is worth recalling that if $\forall a ~ [O,K_{a}] = 0$, then $\left(\mathfrak{L}^{\dagger} - \mathfrak{I}\right)(O) = 0$.

Consider an initial quantum state in the $(J,M)$ sector. The channel being strongly symmetric will preserve both $J$ and $M$ labels, while possibly mixing states from different multiplicities. We are interested in exploring the existence and uniqueness of steady states of symmetric channels in specific symmetry sectors.

In this work, we will characterise correlations and entanglement structure of mixed steady states in the $(J,M)$ sector. We focus on the most `featureless' mixed state in a symmetry sector, namely the Identity matrix within the $(J,M)$ sector, $\rho_{JM} = \frac{1}{\mathfrak{C}_{J}}\sum_{\alpha}\ket{JM\alpha}\bra{JM\alpha}$. We will also sometimes consider the maximally mixed states of all states in the irrep $J$ subspace, $\rho_{J} = \frac{1}{d_J\mathfrak{C}_{J}}\sum_{M,\alpha}\ket{JM\alpha}\bra{JM\alpha}$. 


Note that $\rho_{JM}$ is a strongly symmetric mixed \emph{invariant} state only for the invariant irrep, $J = \triv$. Since $d_J = 1$, the $M$ label is unnecessary.

Next, we show that $\rho_{JM}$ is the unique steady state of strongly symmetric channels that are algebraically complete and unital.


 
\subsection{Unique steady states of complete strongly symmetric unital channels}\label{sec:channel_thm}
We have the following results for steady states of strongly symmetric channels:

\begin{theorem}
There is at least one steady state in $\mathcal{B}^{JM}$ for any strongly symmetric  channel.
\end{theorem}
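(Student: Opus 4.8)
The plan is to restrict the channel to the $(J,M)$ sector and recognize that what remains is an ordinary quantum channel on the finite-dimensional multiplicity space, to which a standard fixed-point argument applies. First I would use the fact that strong symmetry forces each Kraus operator to decompose as in Eq.~\eqref{eq:symm_op}: within the irrep $J$, $K_a$ acts as the identity on the $M$ index and non-trivially only on the multiplicity label, so that $K_a\ket{J,M,\alpha} = \sum_\beta [K_a^J]_{\beta\alpha}\ket{J,M,\beta}$ with coefficients independent of $M$. Consequently each $K_a$ maps the $(J,M)$ sector into itself, and hence $\mathfrak{L}(\rho)=\sum_a K_a\rho K_a^\dagger$ preserves $\mathcal{B}^{JM}$; the restriction $\mathfrak{L}|_{\mathcal{B}^{JM}}$ is a well-defined linear map on the $\mathfrak{C}_J^2$-dimensional operator space associated with the multiplicity Hilbert space $\mathbb{C}^{\mathfrak{C}_J}$.

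Next I would verify that this restriction is itself a quantum channel on $\mathbb{C}^{\mathfrak{C}_J}$. Complete positivity is inherited automatically from the Kraus form. Trace preservation requires that the global completeness relation $\sum_a K_a^\dagger K_a = \mathbb{I}$ descend to the multiplicity space: projecting onto the $(J,M)$ sector and using the $M$-independence of $[K_a^J]$ gives $\sum_a (K_a^J)^\dagger K_a^J = \mathbb{I}_{\mathfrak{C}_J}$. Thus $\mathfrak{L}|_{\mathcal{B}^{JM}}$ is completely positive and trace-preserving, so it maps the set $\mathcal{D}^{JM}$ of density matrices supported in the $(J,M)$ sector into itself.

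Finally, $\mathcal{D}^{JM}$ is a nonempty compact convex subset of the finite-dimensional real vector space of Hermitian operators in $\mathcal{B}^{JM}$, and $\mathfrak{L}|_{\mathcal{B}^{JM}}$ is continuous and maps it into itself. By Brouwer's fixed-point theorem there exists $\rho_\star\in\mathcal{D}^{JM}$ with $\mathfrak{L}(\rho_\star)=\rho_\star$, which is the claimed steady state; alternatively one can take a Cesàro average $\frac{1}{T}\sum_{t=0}^{T-1}\mathfrak{L}^t(\rho_0)$ of an arbitrary initial density matrix $\rho_0\in\mathcal{D}^{JM}$ and extract a convergent subsequence whose limit is fixed.

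I expect the only subtle point to be the positivity of the fixed point. A naive linear-algebra argument---observing that $\mathbb{I}$ is fixed by the adjoint $\mathfrak{L}^\dagger$ because $\mathfrak{L}$ is trace-preserving, so that $1$ is an eigenvalue of $\mathfrak{L}$---produces a fixed operator but not necessarily a positive one, and could in principle return a traceless eigenoperator. Working directly with the compact convex set of density matrices (via Brouwer or the mean-ergodic average) is what guarantees that the fixed point is a genuine state rather than merely a nonzero eigenvector, and this is the step I would be most careful to phrase correctly.
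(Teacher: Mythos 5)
Your proof is correct, but it follows a genuinely different route from the paper's. The paper constructs the explicit operator $\overline{O} = \sum_{\alpha}\ket{JM\alpha}\bra{JM\alpha}$ (the identity on the multiplicity space, i.e.\ the projector onto the $(J,M)$ sector), observes that it commutes with every Kraus operator and is therefore a fixed point of the adjoint map $\mathfrak{L}^{\dagger}$ restricted to $\mathcal{B}^{JM}$, and then invokes the equality of kernel dimensions of a finite-dimensional linear map and its adjoint to conclude that $\mathfrak{L}-\mathfrak{I}$ also has a nontrivial kernel in $\mathcal{B}^{JM}$. Notably, this is precisely the ``naive linear-algebra argument'' you flag in your final paragraph: as literally written, it produces a nonzero fixed \emph{operator} of $\mathfrak{L}$, and promoting it to a genuine state (positive semidefinite, unit trace) requires an extra step --- e.g.\ taking the Hermitian part and then showing that the positive part of a Hermitian fixed point is itself fixed --- which the paper leaves implicit. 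Your approach (restrict to the sector, check that the restriction is CPTP on the multiplicity space $\mathbb{C}^{\mathfrak{C}_J}$, then apply Brouwer's theorem or a Ces\`aro average over the compact convex set of density matrices) sidesteps this entirely and delivers positivity and normalization for free, at the cost of being slightly less elementary. What the paper's duality argument buys in exchange is that it generalizes directly to \emph{counting} fixed points --- the dimension of the commutant of the Kraus algebra lower-bounds the steady-state degeneracy across sectors, which is the use made of it in the reference the paper follows --- whereas the topological fixed-point argument only certifies existence.
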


This proof follows an argument for the lower bound on the degeneracies of the steady states of non-Abelian symmetric Lindbladians in ~\cite{zhang_stationary_2020}.

\begin{proof}
    We construct an operator in $\mathcal{B}^{JM}$, $\overline{O} = \sum_{\alpha}\ket{JM\alpha}\bra{JM\alpha}$. We can check that $\forall a ~ [K_{a}, \overline{O}] = 0$. This proves that $\overline{O} \in \text{ker}_{\mathcal{B}^{JM}}\left(\mathfrak{L}^{\dagger} - \mathfrak{I}\right)$, where the RHS refers to the kernel of $\mathfrak{L}^{\dagger} - \mathfrak{I}$ in the subspace $\mathcal{B}^{JM}$. Since the dimensions of the kernels of a linear operator and its adjoint are the same, this implies that there exists at least one state $\rho^{*} \in \left(\mathfrak{L}- \mathfrak{I}\right)$ which is  a steady state of the channel $\mathfrak{L}$.
\end{proof}

\begin{theorem}
If the quantum channel is \textit{complete and strongly symmetric}, then there is a unique steady state in $\mathcal{B}^{JM}$.
\end{theorem}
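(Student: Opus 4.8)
The plan is to reduce the claim to a standard Perron--Frobenius-type uniqueness result for completely positive maps, after using strong symmetry to collapse the problem onto the multiplicity space. First I would observe that, because every Kraus operator is symmetric and hence of the form \eqref{eq:symm_op}, the channel acts trivially on the $M$ index and preserves $\mathcal{B}^{JM}$; restricting to this sector, the $K_a$ reduce to matrices $k_a \equiv [K_a^J]$ on the $\mathfrak{C}_J$-dimensional multiplicity space, and $\sum_a K_a^\dagger K_a = \mathbb{I}$ descends to $\sum_a k_a^\dagger k_a = \mathbb{I}_{\mathfrak{C}_J}$. Thus the restriction $\tilde{\mathfrak{L}}(\rho) = \sum_a k_a \rho k_a^\dagger$ is a genuine trace-preserving completely positive map on $\mathfrak{C}_J \times \mathfrak{C}_J$ density matrices, and a steady state of $\mathfrak{L}$ in $\mathcal{B}^{JM}$ is exactly a fixed point of $\tilde{\mathfrak{L}}$.

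Next I would translate algebraic completeness into irreducibility of $\tilde{\mathfrak{L}}$. Since symmetric operators restricted to the $J$ block are arbitrary matrices on the multiplicity space, completeness says precisely that products and linear combinations of the $\{k_a\}$ generate the full matrix algebra $M_{\mathfrak{C}_J}(\mathbb{C})$. By Burnside's theorem this is equivalent to the $\{k_a\}$ having no common nontrivial invariant subspace of $\mathbb{C}^{\mathfrak{C}_J}$: a proper common invariant subspace would render the generated algebra block-triangular, contradicting that it is everything. A common invariant subspace $V$ with projector $P$ is exactly the condition $k_a P = P k_a P$ (for all $a$) under which the face of density matrices supported on $V$ is preserved by $\tilde{\mathfrak{L}}$, so its absence is the statement that $\tilde{\mathfrak{L}}$ is an irreducible channel.

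The final step is to invoke the noncommutative Perron--Frobenius theorem (the theory of irreducible positive maps): an irreducible trace-preserving completely positive map on a finite-dimensional matrix algebra has a unique stationary state, which is moreover faithful. This yields a one-dimensional fixed-point space for $\tilde{\mathfrak{L}}$, hence a unique steady state in $\mathcal{B}^{JM}$, as claimed. Equivalently, one can argue through the adjoint, as in the preceding theorem: the commutant $\{k_a, k_a^\dagger\}'$ is always contained in the fixed points of $\tilde{\mathfrak{L}}^\dagger$, completeness forces this commutant to be the scalars $\mathbb{C}\mathbb{I}$, and the faithful stationary state guaranteed by irreducibility upgrades this to an equality $\mathrm{Fix}(\tilde{\mathfrak{L}}^\dagger) = \{k_a,k_a^\dagger\}' = \mathbb{C}\mathbb{I}$, after which the kernel-dimension identity $\dim\ker(\tilde{\mathfrak{L}}-\mathfrak{I}) = \dim\ker(\tilde{\mathfrak{L}}^\dagger-\mathfrak{I})$ closes the argument.

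I expect the main obstacle to be the last inclusion in the adjoint route --- that the fixed points of $\tilde{\mathfrak{L}}^\dagger$ are contained in (not merely contain) the commutant --- which genuinely requires the existence of a faithful stationary state rather than following from the Schwarz inequality for a merely unital map. This is why I would route the proof through irreducibility and cite the Perron--Frobenius theorem directly, treating the commutant computation as a corollary that will also be useful for pinning down the explicit form of the unique state once unitality is assumed, as needed for Result~\ref{res:SS}.
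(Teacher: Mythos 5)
Your proof is correct, but it takes a genuinely different route from the paper's. The paper follows the elementary strategy of Yoshida's uniqueness argument, carried out directly in $\mathcal{B}^{JM}$: first it shows that completeness forces any nonzero steady state to be faithful on the sector (if $\bra{\psi}\rho\ket{\psi}=0$ then $0=\sum_a \bra{\psi}K_a\rho K_a^\dagger\ket{\psi}$ gives $K_a^\dagger\ket{\psi}\in\ker\rho$, so the kernel is invariant under the Kraus algebra, which by completeness acts on the whole multiplicity space, hence $\rho=0$); then, given two distinct steady states $\rho_1,\rho_2$, it interpolates $\rho(x)=(1-x)\rho_1-x\rho_2$ and uses continuity of the spectrum to find $x_0$ where the minimum eigenvalue vanishes, whence $\rho(x_0)=0$, and taking the trace forces $x_0=1/2$ and $\rho_1=\rho_2$. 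Both arguments pivot on exactly the same mechanism --- completeness implies every steady state is full rank --- so the paper is in effect re-deriving, by hand, the finite-dimensional Perron--Frobenius uniqueness that you cite; your version makes the structure explicit via the Burnside equivalence (algebraic completeness $\Leftrightarrow$ no common invariant subspace of the $\{k_a\}$ $\Leftrightarrow$ irreducibility of the restricted channel) and then invokes the Evans--H{\o}egh-Krohn theorem. What your route buys: existence and uniqueness come in one stroke (the paper needs its separate preceding kernel-dimension theorem for existence), and the commutant identity $\mathrm{Fix}(\tilde{\mathfrak{L}}^\dagger)=\{k_a,k_a^\dagger\}'=\mathbb{C}\mathbb{I}$ dovetails with the subsequent Lemma identifying the steady state as $\rho_{JM}$ under unitality. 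What the paper's route buys: it is self-contained, requires no operator-algebraic citations, and never touches the fixed-point/commutant subtlety you correctly flag (that $\mathrm{Fix}(\tilde{\mathfrak{L}}^\dagger)\subseteq\{k_a,k_a^\dagger\}'$ genuinely needs a faithful invariant state and does not follow from unitality alone) --- the interpolation argument sidesteps the adjoint picture entirely. One cosmetic remark: your explicit reduction to the $\mathfrak{C}_J$-dimensional multiplicity space, with $\sum_a k_a^\dagger k_a = \mathbb{I}_{\mathfrak{C}_J}$ inherited from $\sum_a K_a^\dagger K_a=\mathbb{I}$, is only implicit in the paper (which works in $\mathcal{B}^{JM}$ throughout), and spelling it out as you do is a clarifying addition rather than a deviation.
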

This can be proved by generalizing the proof presented in ~\cite{Yoshida_uniqueness_2024} for a similar result for general Lindbladians with Abelian strong symmetries. We generalize the result for quantum channels symmetric under any group, including non-Abelian ones.
\begin{proof}
First, we show that any (unnormalized) steady state $\rho$ either has no zero eigenvalues or is the zero operator. If $\rho$ has a zero eigenvalue, there exists a state $\ket{\psi} \in \mathcal{B}^{JM}$ such that $0 = \bra{\psi}\rho\ket{\psi} = \sum_{a}\bra{\psi}K_{a}\rho K_{a}^{\dagger}\ket{\psi}$. This implies $\forall a ~ \sqrt{\rho}K_{a}^{\dagger}\ket{\psi} = 0$, or equivalently, $K_{a}^{\dagger}\ket{\psi} \in \text{ker}\rho$. By the assumption of completeness of the Kraus operators, $\{K_{a}^{\dagger}\ket{\psi}\}$ spans $\mathcal{B}^{JM}$, thus $\text{ker}\rho  = \mathcal{H}^{JM}$. Since $\rho$ is supported in $\mathcal{B}^{JM}$, then $\rho = 0$.

Next, following~\cite{Yoshida_uniqueness_2024}, we assume $\rho_{1}$ and $\rho_{2}$ are two distinct steady states in $\mathcal{B}^{JM}$. We construct a steady ``state'' $\rho(x) = (1-x)\rho_{1}-x\rho_{2}$ for $0\leq x \leq 1$.  All the eigenvalues of $\rho(0)$ are positive while all eigenvalues of $\rho(1)$ are negative. By the continuity of the spectrum of $\rho(x)$, there exists a real number $0\leq x_{0} \leq 1$ such that
the minimum eigenvalue of $\rho(x_{0})$ is zero, and by previous result, such a $\rho(x_{0})$ must be zero. By taking trace of $\rho(x_{0})$, we get that $x_{0} = 1/2$ and thus $\rho_{1} = \rho_{2}$ which contradicts our assumption of distinctness of the two steady states.
\end{proof}

\begin{lemma}
If the quantum channel is complete, symmetric, and \textit{unital}, the steady state is the projected identity matrix,
$\rho_{JM} = \frac{1}{\mathfrak{C}_{J}}\sum_{\alpha}\ket{JM\alpha}\bra{JM\alpha}$.
\end{lemma}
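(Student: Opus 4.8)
The plan is to exhibit $\rho_{JM}$ explicitly as a fixed point of the channel and then invoke the uniqueness theorem just established. The key observation is that the unnormalized projector onto the $(J,M)$ sector, $\overline{O} = \sum_{\alpha}\ket{JM\alpha}\bra{JM\alpha}$ — the very operator constructed in the proof of the existence theorem — was already shown there to satisfy $[K_a,\overline{O}]=0$ for every Kraus operator. This commutation is simply a restatement of the block structure in Eq.~\eqref{eq:symm_op}: a symmetric $K_a$ preserves both $J$ and $M$, hence leaves the subspace $\mathcal{H}^{JM}$ (and its orthogonal complement) invariant, so it commutes with the orthogonal projector onto $\mathcal{H}^{JM}$, which is exactly $\overline{O}$.

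First I would spell out what unitality contributes. Every channel is trace preserving, $\sum_a K_a^{\dagger}K_a = \mathbb{I}$; unitality adds the dual condition $\mathfrak{L}(\mathbb{I}) = \sum_a K_a K_a^{\dagger} = \mathbb{I}$. It is this second identity, rather than trace preservation, that the argument requires. Using $K_a\overline{O} = \overline{O}K_a$ together with unitality, the action of the channel on $\overline{O}$ collapses:
\begin{equation}
\mathfrak{L}(\overline{O}) = \sum_a K_a\,\overline{O}\,K_a^{\dagger} = \overline{O}\sum_a K_a K_a^{\dagger} = \overline{O}\,\mathbb{I} = \overline{O}.
\end{equation}
Thus $\overline{O}$ is a fixed point of $\mathfrak{L}$ lying in $\mathcal{B}^{JM}$. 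Normalizing by $\Tr\overline{O} = \mathfrak{C}_{J}$ yields the density matrix $\rho_{JM} = \overline{O}/\mathfrak{C}_{J}$, which is therefore a genuine normalized steady state in the $(J,M)$ sector.

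Finally, since the channel is complete and strongly symmetric, the uniqueness theorem guarantees that the steady state in $\mathcal{B}^{JM}$ is unique. Hence $\rho_{JM}$ must coincide with that unique steady state, completing the argument. I expect essentially no serious obstacle here: the technical content is front-loaded into the existence theorem (the commutation $[K_a,\overline{O}]=0$) and the uniqueness theorem. The only points demanding care are to invoke the correct half of the channel conditions — the unitality relation $\sum_a K_a K_a^{\dagger} = \mathbb{I}$ rather than the automatic trace-preservation relation $\sum_a K_a^{\dagger}K_a = \mathbb{I}$ — and to confirm that $\overline{O}$ indeed sits inside $\mathcal{B}^{JM}$, so that the uniqueness result genuinely applies to it.
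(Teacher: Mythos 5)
Your proof is correct and follows essentially the same route as the paper: verify that the normalized projector $\rho_{JM}$ is a fixed point using the block structure of symmetric Kraus operators together with the unitality condition $\sum_a K_a K_a^\dagger = \mathbb{I}$, then invoke the uniqueness theorem for complete strongly symmetric channels. You merely spell out the commutation step $[K_a,\overline{O}]=0$ that the paper compresses into ``we can explicitly check,'' which is a faithful (and slightly more careful) rendering of the same argument.
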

\begin{proof}
    The unitality condition is $\sum_{a}K_{a}K_{a}^{\dagger} = \mathbb{I}$. Using the decomposition of symmetric Kraus operators as in Eq.~\eqref{eq:symm_op}, we can explicitly check that $\rho_{JM}$ is a steady state. By the completeness of Kraus operators, this is the unique steady state.
\end{proof}



\section{Maximally mixed invariant state }\label{sec:mms_correlations}

Consider the maximally mixed invariant state (MMIS), which, as we described in earlier sections, is the unique steady state in the invariant sector for any strongly symmetric, complete, and unital channels. Let $\ket{\alpha}$ be an orthonormal basis of invariant states labeled by the multiplicity,
\begin{align}\label{eq:mms_invariant}
    \rho_{\triv}^{N} = \frac{1}{\mathfrak{C}_{\triv}^{N}}\sum_{\alpha = 1}^{\mathfrak{C}_{\triv}^{N}} \ket{\alpha}\bra{\alpha}  = P_{\triv} \frac{\mathbb{I}}{d_{\triv}},
\end{align}
where $P_{\triv}$ is the projector on the trivial sector, and $\mathbb{I}$ is the identity operator on the whole Hilbert space.

We study the entanglement structure and correlations in the MMIS. Naively, one would expect the MMIS to be very simple and uncorrelated, since it is nothing but the identity matrix (which is a separable state), restricted to a specific symmetry sector. However, the projector onto the trivial sector, $P_{\triv}$, imposes stringent global constraints on the correlation structure of the state. In particular, we will show that for non-Abelian symmetries, this state is necessarily entangled (except for caveat discussed in Sec. \ref{sec:entanglement_classification})\footnote{\label{footnote:Abelian_MMIS}The MMIS for any Abelian group is always fully separable (not entangled), which can be argued as follows. The global charge constraint $g \rho = \lambda(g) \rho$ imposes a global classical constraint, and due to the Abelian nature of the symmetry, one can decompose the global charge $\lambda : G \to U(1)$ into local charges $\lambda_i$, with $\prod_i \lambda_i = \lambda$. Thus, the MMIS is the mixture of all pure product states with charges summing up to the global charge, $\rho^{N}_{\triv} = \sum_{\prod_i \lambda_i = \lambda} \ketbra{\vec{\lambda}}{\vec{\lambda}}$.}.

Before delving into the entanglement structure of the MMIS, we study the correlation functions of the MMIS, which already hints at their non-triviality. As representatives of discrete and continuous non-Abelian symmetries, we will refer to the example of MMIS of $S_3$ and $SU(2)$ symmetries on a spin chain repeatedly. The results for the $S_3$ MMIS are shown in a later section, Sec.~\ref{sec:s3_entanglement} and Appendix~\ref{appsec:s3_details}, whereas the ones for $SU(2)$ are shown in Sec. \ref{sec:lie_mmis_su2}. 

One result we will use to compute correlations in the MMIS is that the maximally mixed state within a symmetry sector of an on-site symmetry is invariant under permutation of its local degrees of freedom:

\begin{lemma}\label{lemma:permutation_invariance}
    The maximally mixed state $\rho_J$ in a symmetry sector (irrep) $J$ of an on-site symmetry $U(g) = \otimes_{i=1}^N U_i(g)$ is invariant under permutation of the sites. In other words, it has weak $S_N$ symmetry.
\end{lemma}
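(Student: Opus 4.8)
The plan is to reduce the statement to a permutation-invariance property of a single projector. Observe that $\rho_J$ is, up to normalization, the orthogonal projector $P_J$ onto the $J$-isotypic subspace $\mathcal{H}^J = \bigoplus_{M,\alpha}\mathbb{C}\ket{JM\alpha}$: indeed $\rho_J = P_J/\Tr(P_J)$ with $\Tr(P_J) = d_J\,\mathfrak{C}^N_J$. Letting $W_\pi$ denote the tensor-factor permutation unitary associated to $\pi \in S_N$, defined by $W_\pi(\ket{v_1}\otimes\cdots\otimes\ket{v_N}) = \ket{v_{\pi^{-1}(1)}}\otimes\cdots\otimes\ket{v_{\pi^{-1}(N)}}$, weak $S_N$ symmetry is the claim $W_\pi \rho_J W_\pi^\dagger = \rho_J$. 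Since the normalization $\Tr(P_J)$ is manifestly permutation-independent, it suffices to prove $W_\pi P_J W_\pi^\dagger = P_J$.

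The key step is the commutation relation $[W_\pi, U(g)] = 0$ for all $g \in G$. This holds precisely because the symmetry is on-site with identical local representations: conjugating a product operator by $W_\pi$ merely relabels the tensor factors, $W_\pi\big(\bigotimes_i A_i\big)W_\pi^\dagger = \bigotimes_i A_{\pi^{-1}(i)}$, and applying this to $U(g) = \bigotimes_i U_i(g)$ with every $U_i(g)$ equal to the same operator $V(g)$ on $\mathcal{V}$ returns $\bigotimes_i V(g) = U(g)$. I would emphasize that this is exactly where the hypothesis of an on-site symmetry with the same irrep $\mathcal{V}$ on each site enters; the relation would fail for site-dependent representations.

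With $[W_\pi, U(g)] = 0$ in hand, I would invoke the character/Haar projection formula for the isotypic projector,
\[
P_J = d_J \int_G \overline{\chi_J(g)}\, U(g)\, dg,
\]
(with the integral replaced by $\tfrac{1}{|G|}\sum_{g\in G}$ for finite $G$), which exhibits $P_J$ as a group average of the $U(g)$. Because $W_\pi$ commutes with every $U(g)$ occurring in the average, it commutes with $P_J$, giving $W_\pi P_J W_\pi^\dagger = P_J$ and hence $W_\pi \rho_J W_\pi^\dagger = \rho_J$. Equivalently and without the explicit formula, $W_\pi$ is a $G$-intertwiner, so it maps the $J$-isotypic component onto itself, and a unitary preserving an invariant subspace fixes the projector onto it.

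This argument carries essentially no technical obstacle; the only point deserving care is the commutation relation $[W_\pi, U(g)] = 0$, whose validity rests entirely on all sites carrying the same representation $\mathcal{V}$. The remaining ingredients—identifying $\rho_J$ with a normalized isotypic projector and using that an operator commuting with the full group average preserves each isotypic projector—are standard representation-theoretic bookkeeping via Schur's lemma and Eq.~\eqref{eq:symm_op}.
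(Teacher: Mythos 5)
Your proof is correct and follows essentially the same route as the paper: both arguments rest on the key observation that on-site symmetry with identical local representations gives $[W_\pi, U(g)] = 0$, and then conclude that the permutation unitary must commute with the isotypic projector $P_J \propto \rho_J$. Your explicit use of the character/Haar projection formula is a minor cosmetic variant of the final step, and your ``equivalently'' remark (that a $G$-intertwiner preserves each isotypic component, hence fixes its projector) is precisely the paper's Schur's-lemma argument.
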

\begin{proof}
    Since $U$ is on-site, it commutes with the permutation group action $\pi$: $[U(g), \pi(\sigma)] = 0$ for all $g \in G$ and $\sigma \in S_N$. Thus, from Schur's lemma, $\pi(\sigma)$ can only have non-zero coefficients between different multiplicity sectors of the same irrep, $\pi = \oplus_J \pi_J$. Since $\rho_J$ acts as a multiple of the identity matrix in the multiplicity space of $J$ and zero elsewhere, then $[\rho_J, \pi(\sigma)] \propto [P_J, \pi_J(\sigma)] = 0$. 
\end{proof}
This applies to MMIS by taking $J = \triv$. This means that any notion of geometrical locality, such as spatial dimension, is absent for the $\rho_J$ states.

\subsection{Long-range correlations in MMIS} \label{sec:LRO}
First, we demonstrate that the MMIS of 
semisimple non-Abelian Lie groups such as
$SU(2)$, which we described in the warm-up example in the introduction (Fig.~\ref{fig:intro}), cannot be expressed as a mixture of short-range entangled pure states. To do this, we first prove that these symmetric states are long-range ordered, and use an argument noted in~\cite{lu_mixed-state_2023} to prove the long-range nature of the entanglement of such mixed states.

For generality, consider semisimple non-Abelian Lie groups. Let $(X_\alpha)_{\alpha=1}^{\dim \mathfrak{g}}$ be an orthonormal basis of the Lie algebra $\mathfrak{g}$ of $G$ with respect to the Killing form. Then, we can define the Casimir element by $C = \sum_\alpha X_\alpha X^\alpha$, where $X^\alpha$ is an element of the dual basis. In the case of the on-site representation $U = \otimes_{i=1}^N U_i$, the representation of the global Casimir element is $U(C) = -\sum_{\alpha} \left(\sum_i U(X_{i,\alpha}) \right)^2$. It can be proven that $U(C)$ for a reducible representation $U$ is constant in each irrep $J$, equal to $c_J$, such that $U(C) = \sum_J c_J P_J$. Moreover, $c_J = 0$ if, and only if, $J = \triv$ \cite{hall_lie_2015}. With this in mind, we can calculate
\begin{align}\label{eq:casimir_correlator}
      0 & = U(C) \rho_\triv^N = - \sum_\alpha \sum_{i,j=1}^N U(X_{i, \alpha}) U(X_{j, \alpha}) \rho_\triv^N \nn \\
      & = \sum_{i=1}^N U_i(C_i) \rho_\triv^N - \sum_\alpha \sum_{i \neq j}^N U(X_{i,\alpha}) U(X_{j, \alpha}) \rho_\triv^N.
\end{align}
Using the fact that $\rho_\triv^N$ is invariant under permutations (See Lemma \ref{lemma:permutation_invariance}), then taking the trace of the equation above gives 
\begin{equation}\label{eq:long-range_correlations}
    - \Tr\left[\rho_\triv^N \sum_\alpha U(X_{x,\alpha}) U(X_{y, \alpha}) \right] = -\frac{c_\mathcal{V}}{N-1},
\end{equation}
for all sites $x \neq y$, where $\mathcal{V}$ is the on-site Hilbert space, assumed to be an irrep of the on-site symmetry. If $\dim \mathcal{V} > 1$, then $c_\mathcal{V} > 0$ \cite{hall_lie_2015}, which implies $\rho^N_{\triv}$ has long-range correlations that vanish in the thermodynamic limit. In the case of $SU(2)$ acting on a spin-$1/2$ chain, for example, Eq.~\eqref{eq:long-range_correlations} becomes the usual spin-spin correlation function,
\begin{equation}\label{eq:su2_correlation}
    \Tr[\rho_\triv^N \vb{S}_i \cdot \vb{S}_j] = -\frac{3}{4(N-1)}.    
\end{equation}




By definition of the MMIS, the expectation values of the single site Lie algebra generator vanishes, $\Tr[\rho_\triv^{N} U(X_{x,\alpha})] =0$. Thus, the connected correlation function also scales as $1/N$ for any arbitrary locations $i,j$ (for example, the spin-spin connected correlation of the $SU(2)$ MMIS is also $-3/4(N-1)$ by Eq.~\eqref{eq:su2_correlation}). Using this result, we can prove that MMIS is necessarily long-range entangled, using an argument from~\cite{lu_mixed-state_2023}. By the Lieb-Robinson bound, the connected correlation function for any short-range entangled pure state (any state obtainable from product state using a finite depth unitary circuit) must decay exponentially with $|i-j|$ at large distances~\cite{Bravyi_2006}. This contradicts the algebraic decay of correlations in Eq.~\eqref{eq:long-range_correlations} by taking $|i-j| \sim N$.


For discrete non-Abelian groups such as $S_3$, the connected correlation function between any two qudits in the MMIS decays exponentially with the system size (discussed below, in Sec.~\ref{sec:exp_values}) which hints at the short-range entangled nature of the state. Particularly, we study the case of $S_3$ group in more detail in   Sec.~\ref{sec:s3_entanglement} and Appendix~\ref{appsec:s3_details} by explicit calculations.

\subsection{Local indistinguishability from identity state}\label{sec:exp_values}


In Section \ref{sec:LRO}, we showed that continuous non-Abelian Lie groups have MMIS with $1/N$-decaying correlations for certain two-body operators. Although this is sufficient to prove long-range order of the MMIS (the correlation amplitude does not depend on the distance between the two sites and it decays slower than exponential), it still goes to zero in the thermodynamic limit $N \to \infty$. This raises the question of whether the same happens for all expectation values of $k$-body observables $A_k$. Intuitively, since the MMIS is constrained only by the global strong symmetry, any local experiment should reveal no information in the thermodynamic limit, and thus the MMIS should be locally indistinguishable from the identity state. Thinking in terms of thermal states, this is related to the equivalence of canonical and grand canonical ensembles under local observables at infinite temperature. Here, we confirm this expectation in the following sense: 

        

\begin{theorem}\label{thm:local_indist}
    The $k$-site reduced density matrix $\rho^{N,k}_\triv \equiv \Tr_{\{1,\ldots,k\}^c}[\rho^N_\triv]$ of the MMIS under a faithful representation of $G$ converges to the maximally mixed state $\propto \mathbb{I}_{\mathcal{V}^{\otimes k}}$ with
    \begin{itemize}
        \item exponentially small error $O(e^{-\alpha (N-k)})$ in trace distance for finite non-Abelian groups, and
        \item algebraically small error $O(k / N)$ in trace distance for semisimple Lie groups. \footnote{Note that for any symmetry group $G$, the 1-site reduced density matrix ($k =1$) is always exactly equal to the identity state $\mathbb{I}_{\mathcal{V}}/d_{\mathcal{V}}$. The reason is that the summation in Eq.~\eqref{eq:reduced_MMIS} has only one allowed term, associated with the fixed irrep of a single site.}
    \end{itemize}
\end{theorem}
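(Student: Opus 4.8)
The plan is to first reduce the statement to the asymptotics of Clebsch--Gordan multiplicities by deriving an exact closed form for the reduced density matrix. Decomposing $\mathcal{V}^{\otimes N} = \mathcal{V}^{\otimes k} \otimes \mathcal{V}^{\otimes (N-k)}$, every global singlet is built by pairing an irrep $\mu$ occurring in the first block with its dual $\bar\mu$ occurring in the second block, through the unique invariant vector of $\mu \otimes \bar\mu$. Tracing out the complement of the $k$ sites then projects onto the $\mu$-isotypic components, and I expect to obtain
\begin{equation}\label{eq:reduced_MMIS}
    \rho^{N,k}_\triv = \sum_\mu \frac{\mathfrak{C}^{N-k}_{\bar\mu}}{\mathfrak{C}^{N}_\triv\, d_\mu}\, P^{(k)}_\mu,
\end{equation}
where $P^{(k)}_\mu$ is the projector onto the $\mu$-isotypic component of $\mathcal{V}^{\otimes k}$ and $\mathfrak{C}^{N}_\triv = \sum_\mu \mathfrak{C}^{k}_\mu \mathfrak{C}^{N-k}_{\bar\mu}$ by the same branching count. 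For $k=1$ only $\mu = \mathcal{V}$ contributes and normalization forces the coefficient to $1/d_\mathcal{V}$, recovering the exact identity noted in the footnote. Writing the target as $\mathbb{I}/d^k = \sum_\mu d^{-k} P^{(k)}_\mu$ (with $d = d_\mathcal{V}$) and using $\Tr P^{(k)}_\mu = d_\mu \mathfrak{C}^{k}_\mu$, the trace distance collapses to a sum over irreps,
\[
    \| \rho^{N,k}_\triv - \mathbb{I}/d^k \|_1 = \sum_\mu \mathfrak{C}^{k}_\mu \left| \frac{\mathfrak{C}^{N-k}_{\bar\mu}}{\mathfrak{C}^{N}_\triv} - \frac{d_\mu}{d^k} \right|.
\]

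Next I would control the multiplicities through the character formula $\mathfrak{C}^{n}_\mu = \int_G dg\, \overline{\chi_\mu(g)}\, \chi(g)^n$ (a finite average $\frac{1}{|G|}\sum_g$ for finite $G$), where $\chi$ is the on-site character and $d = \chi(e)$. The key input is faithfulness: since $|\chi(g)| = d$ forces $U(g)$ to be a scalar, and a scalar commutes with the whole faithful representation, the set of maximizers lies in the finite center $Z(G)$; away from this finite set $|\chi(g)| \le d - \delta$ for a gap $\delta > 0$. For compatible $\mu$ (matching central character, which is automatic whenever $\mu$ appears in $\mathcal{V}^{\otimes k}$) the maximizing elements add coherently and give $\mathfrak{C}^{n}_\mu \sim C_n\, d_\mu\, d^n$ with a prefactor $C_n$ that is \emph{independent of} $\mu$; the central-character phase sums are common to numerator and denominator and cancel in every ratio $\mathfrak{C}^{N-k}_{\bar\mu}/\mathfrak{C}^{N}_\triv$.

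For finite non-Abelian $G$ the gap $\delta$ makes $C_n$ a genuine constant, so each ratio equals $d_\mu/d^k$ up to relative error $(1-\delta/d)^{N-k} = O(e^{-\alpha(N-k)})$. The crucial simplification is that the weights $\mathfrak{C}^{k}_\mu$ are then absorbed by the sum rule $\sum_\mu d_\mu \mathfrak{C}^{k}_\mu = \dim \mathcal{V}^{\otimes k} = d^k$, so that $\sum_\mu \mathfrak{C}^{k}_\mu (d_\mu/d^k)\, O(e^{-\alpha(N-k)}) = O(e^{-\alpha(N-k)})$ uniformly. For compact semisimple Lie $G$ the same scheme runs with Laplace's method: in normal coordinates at each maximizer $\chi(g)^n \approx d^n \exp(-\tfrac{n}{2}\, t^\top Q t)$ with $Q$ the positive-definite Casimir form of $\mathcal{V}$ (nondegeneracy exactly as in Sec.~\ref{sec:LRO}), so the Gaussian integral gives $C_n \propto n^{-\dim\mathfrak{g}/2}$. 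Hence $\mathfrak{C}^{N-k}_{\bar\mu}/\mathfrak{C}^{N}_\triv = (d_\mu/d^k)(N/(N-k))^{\dim\mathfrak{g}/2}(1 + \cdots)$, whose deviation from $d_\mu/d^k$ is $\tfrac{\dim\mathfrak{g}}{2}\tfrac{k}{N} + O((k/N)^2)$, and the same sum rule converts this into the claimed $O(k/N)$ trace distance.

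The main obstacle is the Lie-group case: making the Laplace asymptotics uniform over the growing family of irreps $\mu$ appearing in $\mathcal{V}^{\otimes k}$, whose highest weights grow like $O(k)$. I must show that the $\mu$-dependent corrections from the variation of $\overline{\chi_\mu}$ across the $O(1/\sqrt{n})$-wide Gaussian peak are subleading relative to the universal $(N/(N-k))^{\dim\mathfrak{g}/2}$ factor, so the relative error is genuinely uniform in $\mu$ before summing. Establishing the $\mu$-independence of the leading Hessian $Q$ and handling the several maximizing central elements with matching phases are the delicate technical points; the finite-group case is comparatively immediate, since there are only finitely many irreps and a hard spectral gap.
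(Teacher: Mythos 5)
Your proposal is correct and follows essentially the same route as the paper's proof in Appendix~\ref{appsec:exp_values_proof}: the same block-diagonal form of the reduced density matrix (Eq.~\eqref{eq:reduced_MMIS}), the same Schur-orthogonality character sums with maximizers located on the center via faithfulness, a spectral gap giving the exponential error for finite groups, and Laplace/Gaussian asymptotics producing the $n^{-\dim\mathfrak{g}/2}$ prefactor and the $O(k/N)$ error for semisimple Lie groups. The only difference is bookkeeping: the paper ensures every irrep of $\mathcal{V}^{\otimes k}$ has its dual present in $\mathcal{V}^{\otimes(N-k)}$ via Burnside--Brauer along a subsequence of system sizes, whereas you obtain the same compatibility from central-character cancellation in the multiplicity ratios, which works for all large $N$ at which the MMIS exists.
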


An immediate corollary is that the expectation value of any $k$-body operator $A_k$ in the MMIS $\rho^N_\triv$ is
    \begin{equation}\label{eq:exp_vals_corollary}
        \langle A_k \rangle = \frac{1}{2^k} \Tr[A_k] + %
        \begin{cases}
            O(\norm{A_k} e^{-\alpha(N-k)}), & \text{if $G$ is finite,} \\
            O(\norm{A_k} k / N), & \text{if $G$ is a Lie group.} \\ 
        \end{cases}
    \end{equation}
    

In appendix \ref{appsec:exp_values_proof} we provide a proof sketch of these results. In the next section we show that, globally, however, the state is highly entangled for large bipartitions.

\section{Entanglement structure of MMIS}\label{sec:mms_entanglement}
In the previous section, we showed that the MMIS of $SU(2)$, and in general, semisimple non-Abelian Lie groups, must necessarily be long-range entangled. Here we show that we can describe their entanglement structure even more precisely.

\subsection{Mixed state entanglement measures}\label{sec:entanglement_measures}

In order to diagnose entanglement and quantum correlations, we review some measures of entanglement in mixed states~\cite{horodecki_quantum_2009}. A mixed state $\rho$ can always be expressed (non-uniquely) as a convex mixture of pure states from an orthonormal basis set, $\rho = \sum_i p_i \ket{\psi_i}\bra{\psi_i}$. For a given bipartition, $\rho$ is separable if it can be expressed as a mixture of product states with respect to that bipartition; otherwise the state is entangled. 

The entanglement of formation $E^f_{A:B}(\rho)$~\cite{bennett_mixed_1996} for a particular bipartition $A:B$ is defined as the average entanglement entropy of these pure states minimized over all possible decompositions, $E^f_{A:B} = \inf_{\{p_{i}, \ket{\psi_i}\}} \sum_i p_{i}S_{A:B}(\ket{\psi_i})$, where $S_{A:B}(\ket{\psi})$ refers to the von Neumann entanglement of the pure state $\psi$ for the bipartition $A:B$. A regularized version of the entanglement of formation~\cite{hayden_asymptotic_2001} (also dubbed `entanglement cost') is defined in the asymptotic limit of infinite copies of the state as $E^F_{A:B}(\rho) = \lim_{n\to \infty}E^f_{A:B}(\rho^{\otimes n})/n$. Due to the non-additivity of entanglement of formation, $E^F_{A:B}(\rho)\leq E^f_{A:B}(\rho)$. 

Another measure of mixed state entanglement is the entanglement of distillation $E^D_{A:B}$~\cite{bennett_purification_1996}. It  is defined as the maximum number of Bell pairs Alice and Bob (localised on $A$ and $B$ respectively) can obtain via local operations and classical communication (LOCC), per sample of the state $\rho$, asymptotically, when they have access to an infinite samples of $\rho$. Formally, we can write $E^D_{A:B}(\rho) = \sup_{s \in \mathcal{S}}  \lim_{n \to \infty} N_{\text{Bell}}(\rho^{\otimes n}|s)/n$, where $N_{\text{Bell}}(\rho|s)$ is the number of Bell pairs Alice and Bob can extract form state $\rho$ via the LOCC process (equivalently distillation strategy) $s$, and $\mathcal{S}$ is the set of all distillation strategies.

It is known that \textit{good} bipartite entanglement measures $E_{A:B}$ that satisfy several desirable properties (monotonicity, additivity, convexity, continuity)  are bounded by $E^D_{A:B}$ and $E^F_{A:B}$~\cite{hayden_asymptotic_2001}, as 
\begin{align}\label{eq:ent_hierarchy}
    E^{D}_{A:B} \leq E_{A:B} \leq E^{F}_{A:B}\leq E^f_{A:B}.
\end{align} 
One such \textit{good} bipartite entanglement measure that is geometric in nature is the relative entropy of entanglement, defined as $E^R_{A:A^c}(\rho) \equiv \min_{\sigma \in \text{SEP}_{A:A^c}} S(\rho \| \sigma)$ \cite{vedral_entanglement_1998, christandl_structure_2006}. It measures the distance from $\rho$ to the set of separable states $\text{SEP}_{A:A^c}$ with respect to the $A:A^c$ bipartition.

Note, some easily computable measures of mixed state entanglement do not satisfy all the properties of entanglement measures listed, and thus do not satisfy this bound. For example, logarithmic entanglement negativity~\cite{vidal_computable_2002}, while an upper bound for $E^D$, can be larger than entanglement of formation. Furthermore, logarithmic negativity, while computable, is not a faithful measure of entanglement: zero logarithmic negativity does not necessarily imply separability.

\subsection{Bipartite decomposition and invariant sector}

We are interested in the bipartite entanglement structure of mixed states defined on the Hilbert space defined in Eq.~\eqref{eq:multiplicity_decomp}. Consider the bipartition of the Hilbert space into $A$ and its complement $A^c$ (where $A$ and $A^c$ consist of $N_A$ and $N_{A^c}$ local degrees of freedom respectively), $\mathcal{H} = \mathcal{H}_{N_{A}}\otimes \mathcal{H}_{N_{A^c}}$. Using the definition of multiplicities in the tensor product space used in Eq.~\eqref{eq:multiplicity_decomp}, we get,
\begin{align}
    \mathcal{V}^{\otimes N} & = \mathcal{V}^{\otimes N_{A}}\otimes \mathcal{V}^{\otimes N_{A^c}} \nonumber\\&= \left(\bigoplus_{J_{A}}\mathfrak{C}^{N_{A}}_{J_{A}}V_{J_{A}}\right)\otimes\left(\bigoplus_{J_{A^c}}\mathfrak{C}^{N_{A^c}}_{J_{A^c}}V_{J_{A^c}}\right) \nonumber\\&= \bigoplus_{J_{A}J_{A^c}}\mathfrak{C}^{N_{A}}_{J_{A}}\mathfrak{C}^{N_{A^c}}_{J_{A^c}}\left(V_{J_{A}}\otimes V_{J_{A^c}}\right)
\end{align}
The tensor product of irreps can be further decomposed into a Clebsch-Gordan series, $V_{J_{A}}\otimes V_{J_{A^c}} = \bigoplus_{J^{\prime}}\mathfrak{D}^{J^{\prime}}_{J_{A}J_{A^c}} V_{J^{\prime}}$. By matching the multiplicities in the bipartite decomposition, we have the following relation, $\mathfrak{C}^{N}_J = \sum_{J_{A},J_{A^c}}^{J}\mathfrak{C}_{J_{A}}^{N_A}\mathfrak{C}_{J_{A^c}}^{N_{A^c}}\mathfrak{D}_{J_{A}J_{A^c}}^{J}$.

Note, the Clebsch-Gordan multiplicities, $\mathfrak{D}_{J_A J_{A^c}}^J$ are non-negative integers, and may in general be greater than 1. For $SU(2)$, one can show that $\mathfrak{D}_{J_{A}J_{A^c}}^{J} = 0 \text{ or } 1$~\cite{hall_lie_2015}. For any general (compact) group we can prove that the Clebsch-Gordan multiplicities corresponding to the invariant (trivial) irrep for any finite (or compact Lie) group is always $0$ or $1$. Character of a group representation is a class function on the group that associates to each group element the trace of the corresponding matrix. The character of the invariant irrep is $\chi_{V_{\triv}} = 1$, by definition. The following result is a simple consequence of Schur orthogonality relations for the irreducible characters of the group~\cite{fulton_representation_2004},
\begin{align}\label{eq:cg_invariant}
    \mathfrak{D}_{J_{A}J_{A^c}}^{\triv} & = \langle \chi_{V_{\triv}}, \chi_{V_{J_{A}}\otimes V_{J_{A^c}}}\rangle \nonumber \\ &=  \int dg \chi_{V_{J_{A}}}(g)\chi_{V_{J_{A^c}}}(g) = 
    \begin{cases}
		1, & \text{if $V_{J_{A}^{*}}\cong V_{J_{A^c}}$}\\
        0, & \text{otherwise.}
    \end{cases}
\end{align}
Note, $J^{*}$ refers to the irrep conjugate to $J$. The formula above refers to a compact Lie group, with the measure of the integral given by the Haar invariant measure. For irreps of a finite group $G$, an analogous formula applies with the integral replaced by a sum $1/|G|\sum_{g\in G}$. 

Hence, the invariant irrep $\triv$ appears in the Clebsch-Gordan decomposition of $V_{J_{A}}\otimes V_{J_{A^c}}$ only when $J_{A^c} = J^{*}_{A}$. Furthermore, in $V_{J_{A}}\otimes V_{J^{*}_{A}}$, $V_{\triv}$ appears with multiplicity $1$. Let us introduce the notation $\mathcal{J}_{AA^c}$ to refer to the set of irreps $J$ which have the following properties: $J$ can be represented in $\mathcal{H}_{A}$ and $J^{*}$ can be represented in $\mathcal{H}_{A^c}$. The result Eq.~\eqref{eq:cg_invariant} implies a general formula for bipartite decomposition of multiplicities for the invariant sector, 
\begin{align}\label{eq:multiplicity_factorization}
    \mathfrak{C}^{N}_{\triv} = \sum_{J \in \mathcal{J}_{AA^c}}\mathfrak{C}_{J}^{N_A}\mathfrak{C}_{J^{*}}^{N_{A^c}}.
\end{align}


We now consider the bipartite decomposition of the maximally mixed invariant state as defined in Eq.~\eqref{eq:mms_invariant}. Eq.~\eqref{eq:multiplicity_factorization} leads to a different convenient description of the multiplicity states $\ket{\alpha}$. For a given bipartition as discussed above, we can consider a global invariant state by constructing a singlet $\ket{JJ^{*}}_{AA^c} \in V_J \otimes V_{J^*}$ out of the irrep $V_{J}$ on $A$ and the irrep $V_{J^{*}}$ on $A^c$. $V_{J}$ appears in  $\mathcal{H}_{A}$ with multiplicity $\mathfrak{C}_{J}^{N_A}$, $V_{J^{*}}$ appears in $\mathcal{H}_{A^c}$ with multiplicity $\mathfrak{C}_{J^*}^{N_{A^c}}$. For each pair of irreps, we can construct an orthonormal singlet state, labeled as, $\ket{JJ^{*},a,b}_{AA^c}$ for $a = 1,\cdots,\mathfrak{C}_{J}^{N_A}$ and $b = 1,\cdots,\mathfrak{C}_{J^{*}}^{N_{A^c}}$. Crucially, such states form a complete orthonormal basis for all singlet states realised in $\mathcal{H}$, with the completeness guaranteed by Eq.~\eqref{eq:multiplicity_factorization}. Thus Eq.~\eqref{eq:mms_invariant} can be recast as,
\begin{align}\label{eq:mms_decomp}
    \rho_{\triv}^{N} = \frac{1}{\mathfrak{C}_{\triv}^{N}}\sum_{J \in \mathcal{J}_{AA^c}}\sum_{a = 1}^{\mathfrak{C}_{J}^{N_{A}}}\sum_{b = 1}^{\mathfrak{C}_{J^*}^{N_{A^c}}} 
    \ket{JJ^{*},a,b}_{AA^c} \bra{JJ^{*},a,b}_{AA^c}
\end{align}
Note, the bipartite entanglement properties of $\ket{JJ^{*},a,b}_{AA^c}$ are independent of the multiplicity labels $a$ and $b$, and are just the same as the singlet $\ket{JJ^{*}}_{AA^c}$. 

The projector on the singlet $ [\ket{JJ^{*}}\bra{JJ^{*}}]_{AA^c}$ can be expressed in terms of an integral over the tensor product of unitary representations of the group on $A$ and $A^c$ respectively, $\sigma = \int_{G}dg U_{J}(g) \otimes {U_{J^*}(g)}$, with the Haar measure $\int_{G} dg = 1$ (with an equivalent description for finite groups, replacing the integral by the average over group elements) \footnote{This is true, since the property of singlets $\ket{JJ^*}_{AA^c}$ dictates that $U_J(g)\otimes U_{J^*}(g)\ket{JJ^*}_{AA^c} = \ket{JJ^*}_{AA^c}$ for all $g\in G$.}. Let the dimension of irrep $J$ be denoted as $d_{J}$. The Schur orthogonality relation for the group elements for any compact group leads to the following expression for the elements of the projector, 
\begin{align}\label{eq:Schur_orthog_singlets}
        \left[\ket{JJ^{*}}\bra{JJ^{*}}\right]_{ij,kl} &= \int_{G}dg [U_{J}(g)]_{ij} \otimes [U_{J^*}(g)]_{kl} = \frac{\delta_{ik}\delta_{jl}}{d_{J}} \nonumber\\&\quad \quad \text{for }i,j\text{ in }A, \text{and }k,l\text{ in }A^c.  
    \end{align}
Eqs.~\eqref{eq:mms_decomp} and ~\eqref{eq:Schur_orthog_singlets}
 allow us to compute various entanglement quantities of the MMIS exactly.

\subsection{Entanglement of formation and distillation}\label{sec:entanglement_equality}
\begin{figure}
    \centering
    \includegraphics[width = \columnwidth]{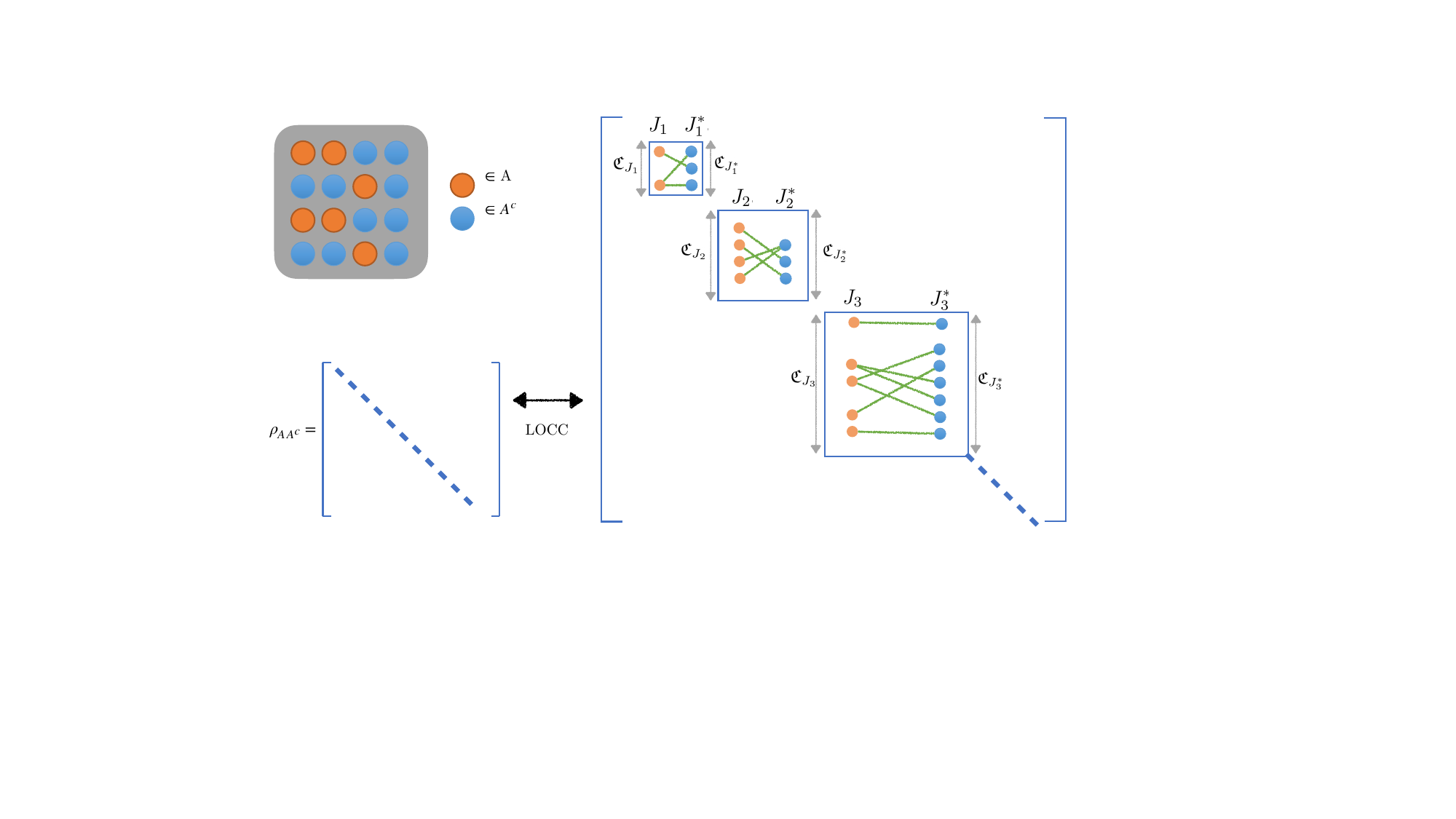}
    \caption{Block diagonal bipartite structure of $\rho_{\triv}^N$ into two (not necessarily contiguous) parts $A$ and $A^c$. The state can be decomposed into singlets between $J_i$ irrep in $A$ and $J_i^*$ irrep in $A^c$, which appear in $\mathfrak{C}_{J_i}$ and $\mathfrak{C}_{J_i^*}$ times. The pairings can be distinguished by LOCC, without collapsing the entanglement in the pure states. }
    \label{fig:Block_diag}
\end{figure}
We show here that the bipartite entanglement measures for the maximally mixed invariant state match, and can be easily computed, starting from the convenient decomposition in Eq.~\eqref{eq:mms_decomp}. An upper-bound for the entanglement of formation is directly given by the specific decomposition in Eq.~\eqref{eq:mms_decomp}, 
\begin{align}
    E^{f,*}_{A:A^c}(\rho_{\triv}^{N}) = \frac{1}{\mathfrak{C}_{\triv}^{N}}\sum_{J\in \mathcal{J}_{AA^c}}\mathfrak{C}_{J}^{N_{A}}\mathfrak{C}_{J^{*}}^{N_{A^c}}S_{A:A^c}(\ket{JJ^{*}}_{AA^c}).
\end{align}

For the entanglement of distillation, Alice and Bob can individually measure the irrep $J$ and $J^{*}$ and the local multiplicity labels $a,b$, with access to just their qubits. Thus they can obtain the pure singlet state $\ket{JJ^{*},a,b}$ with probability $1/\mathfrak{C}^{N}_{\triv}$ for any $a,b$ and for the allowed irreps $J \in \mathcal{J}_{AA^c}$. Each singlet state $\ket{JJ^{*},a,b}$ can be distilled to give $S_{A:A^c}(\ket{JJ^{*}})$ Bell pairs~\cite{Bennett_1996}, and thus by this distillation strategy, the lower bound to the entanglement of distillation is also given by,
\begin{align}\label{eq:mms_entanglement}
    E^{D,*}_{A:A^c}(\rho_{\triv}^{N}) = \frac{1}{\mathfrak{C}_{\triv}^{N}}\sum_{J\in \mathcal{J}_{AA^c}}\mathfrak{C}_{J}^{N_{A}}\mathfrak{C}_{J^{*}}^{N_{A^c}}S_{A:A^c}(\ket{JJ^{*}}_{AA^c}).
\end{align}

Since the lower bound to $E^{D}_{A:A^c}$ and the upper bound to $E^{F}_{A:A^c}$ match, by the hierarchy of entanglement measures in Eq.~\eqref{eq:ent_hierarchy}, all \textit{good} entanglement measures are equal, and equal the expression in Eq.~\eqref{eq:mms_entanglement}. The bipartite structure of the density matrix which leads to this result is described schematically in Fig.~\ref{fig:Block_diag}.

The crucial property of the state that allows one to derive this result is that $\rho_{\triv}^{N}$ is a mixture of pure states $\ket{JJ^{*},a,b}$ which are distinguishable by LOCC, without destroying the entanglement between $A$ and $A^c$~\cite{livine_entanglement_2005}. This is a generalization of the property of `local orthogonality' property of mixed states introduced in~\cite{horodecki_entanglement_1998}, who demonstrated that mixtures of locally orthogonal states have equal entanglement of formation and distillation.

We can further simplify the expression of the entanglement by explicitly computing the entanglement of the singlet $\ket{JJ^{*}}_{AA^c}$, which is obtained form the reduced density matrix on subregion $A$ resulted from tracing out its complement $A^c$. Here we use the Schur orthogonality relations in Eq.~\eqref{eq:Schur_orthog_singlets} for the partial trace on the projector on singlets $\Tr_{A^c}\ket{JJ^{*}}\bra{JJ^{*}} = (1/d_{J})\sum_{\alpha = 1}^{d_{J}}\ket{\alpha}\bra{\alpha} = \mathbb{I}_{J}/d_{J}$, with $\mathbb{I}_{J}$ being the $d_J\times d_J$ dimensional identity matrix. This implies that $S_{A:A^c}(\ket{JJ^{*}}) = \log d_J$. Thus we have the simplified expression for the entanglement of formation and the entanglement of distillation, 
\begin{align}\label{eq:mmis_entanglement}
    E_{A:A^c}\left(\rho_{\triv}^{N}\right) = \frac{1}{\mathfrak{C}_{\triv}^{N}}\sum_{J\in \mathcal{J}_{AA^c}}\mathfrak{C}_{J}^{N_{A}}\mathfrak{C}_{J^{*}}^{N_{A^c}} \log d_{J}.
\end{align}

Note, the equality of the entanglement measures for the maximally mixed invariant state was earlier shown for $SU(2)$ group and tensor product spaces of their fundamental representations by~\cite{livine_entanglement_2005}, using Schur-Weyl duality. Here the result is shown to be true for any compact group: independent of the specific group, tensor product space, or Schur-Weyl duality.


\subsubsection{Mixed states in other symmetry sectors}
\label{sec:other_sym_sectors}




We briefly comment on whether mixed states in other symmetry sectors have the property of the equality of entanglement measures. In the invariant sector of the tensor product space of irreps, the factorization of multiplicities in Eq.~\eqref{eq:multiplicity_factorization}, $\mathfrak{C}^{N}_{\triv} = \sum_{J \in \mathcal{J}_{AA^c}}\mathfrak{C}_{J}^{N_A}\mathfrak{C}_{J^{*}}^{N_{A^c}}$ is an equivalent condition to the states being locally distinguishable by LOCC, without destroying the entanglement. However, in general, the Clebsch-Gordan coefficients $\mathfrak{D}^{J}_{J_A J_{A^c}}$ can be greater than 1 for certain symmetries and their representations. Thus, in general, the equality of entanglement measures is a specific property of mixed states in the invariant sector. 

In the particular case of $SU(2)$ which is multiplicity-free~\cite{hall_lie_2015}, it turns out that the Clebsch-Gordan multiplicities are trivial $\mathfrak{D}^{J}_{J_A J_{A^c}} = 0, 1$ for all irreps and symmetry sectors. The irreps are the spin$-j$ representations of $SU(2)$, and the states are specified by the irrep label, the total angular momentum $j$ and the angular momentum along $z$ direction $m$. We can consider equiprobable mixture of pure states within such $(j,m)$ sectors. Recall, they arise naturally under the application of strong symmetric channels on an initial pure state in a specific $(j,m)$ sector, as described in Sec.~\ref{sec:channel_thm}. The pure states appearing in the mixed state are thus of the form (given a bipartition $A:A^c$),  $\ket{j_a j_b; j m, a, b}$, where $j_a$ and $j_b$ refer to the total angular momenta in $A$ and $A^c$ respectively, $j, m$ refer to the angular momenta of the whole system, and $a,b$ refer to the multiplicity labels in $A$ and $A^c$ respectively. These states are locally distinguishable by LOCC, by just measuring the labels $j_a$ and $a$ in $A$, and $j_b$ and $b$ in $B$, followed by classical communication. In particular, the entanglement of formation and distillation of the maximally mixed state within a particular $(j,m)$ sector for the $SU(2)$ group is given by, 
\begin{align}\label{eq:su2_entanglement}
&E_{A:A^c}\left(\rho_{(j,m)}^{N}\right) = \nonumber\\
&\frac{1}{\mathfrak{C}_{j}^{N}}\sum_{j_a, j_b}^{|j_a-j_b| \leq j \leq j_a + j_b}\mathfrak{C}_{j_a}^{N_{A}}\mathfrak{C}_{j_b}^{N_{A^c}} S_{A:A^c}\left(\ket{j_a j_b; j m}_{AA^c}\right). \nonumber\\
&\quad \quad \quad \quad \text{(specific to SU(2))}
\end{align}



\subsection{Symmetry-based classification}\label{sec:entanglement_classification}
Eq.~\eqref{eq:mmis_entanglement} implies that, generically, non-Abelian symmetries lead to entangled MMIS, since there exist irreps with $d_J>1$. In contrast, Abelian symmetries necessitate zero entanglement of formation, and hence, separability of the MMIS. 

If the non-Abelian symmetry is a finite discrete group, Eq.~\eqref{eq:mmis_entanglement} implies the entanglement of formation for any bipartition is $O(1)$, even in the thermodynamic limit, $N\to \infty$. We provide an explicit example of this in Sec.~\ref{sec:s3_entanglement} for the $S_3$ MMIS for a spin chain. 

The only case in which the MMIS of a non-Abelian symmetry is separable with respect to a bipartition $A:B$ is when $d_J = 1$ for all allowed irreps $J \in \mathcal{J}_{AB}$. This happens, for example, if the on-site Hilbert space $\mathcal{V}$ is the 2d irrep of $G = D_4$, and if both $A$ and $B$ have even sites, as one can check from the character table of $D_4$. This is, however, always a fine-tuned scenario, as adding one more site to both $A$ and $B$ gives at least one higher-dimensional irrep in $\mathcal{J}_{AB}$ no matter the symmetry group and $\mathcal{V}$. Indeed, we prove in Appendix \ref{appsec:exp_values_proof} that there are infinite sequences of $N$ and $|A|$ for which the MMIS is entangled.

In Sec.~\ref{sec:lie_mmis}, we consider the case of non-Abelian continuous groups, which are characterised by irreps of dimension $d_J$ that are extensive in the thermodynamic limit. In Sec.~\ref{sec:lie_mmis_su2}, we first show that the half system entanglement for the $SU(2)$ MMIS on $N$ qubits is $1/2 \log N +O(1)$ in the thermodynamic limit $N \gg 1$. In Sec.~\ref{sec:lie_mmis_gen} we generalize the results to any compact semisimple Lie group, to show that the entanglement of formation and distillation scales as $\Theta(\log N)$ for large enough bipartitions ($A, A^c \gtrsim O(\sqrt{N})$). 

This implies that MMIS for such groups are genuinely long-range entangled, and we argue in Sec.~\ref{sec:entanglement_measures}, that at least a $\Omega(\log N)$ depth adaptive circuit is necessary to prepare such a state. 



\subsection{Other measures: Negativity and Operator Entanglement}

Equation \eqref{eq:mms_decomp} also allow us to exactly calculate the entanglement negativity of $\rho_{\triv}^N$. First, note that the density matrices of the singlet states $\ket{JJ^{*},a,b}_{AA^c}$ are supported in orthogonal subspaces for different $J$, $a$ or $b$. This means that the negativity $\mathcal{N}_{A:A^c}(\rho) = (\norm{\rho^{T_A}}_1-1)/2$ of $\rho_{\triv}^N$ is just the convex sum of the negativities for each singlet state:
\begin{equation}
    \mathcal{N}_{A:A^c}(\rho_{\triv}^{N}) = \frac{1}{\mathfrak{C}_{\triv}^{N}}\sum_{J\in \mathcal{J}_{AA^c}}\mathfrak{C}_{J}^{N_{A}}\mathfrak{C}_{J^{*}}^{N_{A^c}} \mathcal{N}_{A:A^c}(\ket{JJ^{*}}_{AA^c}).
\end{equation}
Moreover, from Schur's orthogonality relations \eqref{eq:Schur_orthog_singlets}, we know that each singlet is the maximally entangled state in a Hilbert space of dimension $d_J$, which has negativity $(d_J-1)/2$. Hence, the logarithmic negativity of $\rho_{\triv}^{N}$ is given by 
\begin{equation}
    E^\mathcal{N}_{A:A^c}(\rho_{\triv}^{N}) = \log \norm{\rho^{T_A}}_1 = \log \sum_{J\in \mathcal{J}_{AA^c}} \frac{\mathfrak{C}_{J}^{N_{A}}\mathfrak{C}_{J^{*}}^{N_{A^c}}}{\mathfrak{C}_{\triv}^{N}} d_J.
\end{equation}
In particular, from the concavity of $\log$, we recover the general result that $E^\mathcal{N}_{A:A^c} \geq E^\mathcal{D}_{A:A^c}$. Furthermore, as we will show in the following sections, for continuous groups the probability distribution in the convex sum above is highly peaked around a particular $J_{\text{max}}$ in the thermodynamic limit, implying the logarithmic negativity also converges to the same value as the \textit{good} entanglement quantities given by \eqref{eq:mmis_entanglement}. We also note that a similar calculation can be applied to calculate all Rényi negativities.


Equation \eqref{eq:mms_decomp} also allows us to exactly calculate several other entanglement-like quantities of $\rho^{N}_{\triv}$. Taking the partial trace over $A^c$ in Equation \eqref{eq:mms_decomp}, 
\begin{align}\label{eq:reduced_MMIS}
    \rho^{A}_{\triv} \equiv \Tr_{A^c} \rho^{N}_{\triv} = \sum_{J\in \mathcal{J}_{AA^c}}\sum_{a = 1}^{\mathfrak{C}^{N_A}_{J}}\frac{\mathfrak{C}^{N_{A^c}}_{J^*}}{\mathfrak{C}_{\triv}^{N}d_{J}}\mathbb{I}_{J}\otimes \ket{a}\bra{a}.
\end{align}

Computing the von Neumann entropy of $\rho^{A}$ is now simple, since the density matrix is block diagonal with constant entries. We get the following expression for the entanglement entropy,
\begin{align}\label{eq:ent_entropy}
    S(\rho^{A}_{\triv}) = \sum_{J \in \mathcal{J}_{AA^c}}\frac{\mathfrak{C}^{N_A}_{J}\mathfrak{C}^{N_{A^c}}_{J^*}}{\mathfrak{C}_{\triv}^{N}}\log\left(\frac{\mathfrak{C}_{\triv}^{N}d_{J}}{\mathfrak{C}_{J^*}^{N_{A^c}}}\right).
\end{align}

We can also compute the operator entanglement of the maximally mixed state exactly as well. Operator entanglement is a measure of both quantum and classical correlations in the density matrix, and is defined as the entanglement entropy of the vectorized density matrix. For the maximally mixed invariant state, we have, $|\rho^{N}_{\triv}) = \frac{1}{\sqrt{\mathfrak{C}^{N}_{\triv}}}\sum_{J \in \mathcal{J}_{AA^c}}\sum_{a,b}\ket{JJ^{*}ab}\ket{JJ^{*}ab}$, where the prefactor is added to normalize the vectorized density matrix. The operator entanglement is defined as the von Neumann entropy of reduced density matrix constructed from the vectorized density matrix, $\Tr_{A^c}|\rho^{N}_{\triv})(\rho^{N}_{\triv}|$. Using the Schur orthogonality relations for the projector on singlets and the bipartite decomposition in Eq.~\eqref{eq:mms_decomp}, we get,
\begin{align}
    &\Tr_{A^c}|\rho^{N}_{\triv})(\rho^{N}_{\triv}| = \nonumber\\&\frac{1}{\mathfrak{C}_{\triv}^{N}}\sum_{J \in \mathcal{J}_{AA^c}}\frac{\mathfrak{C}_{J^*}^{N_B}}{d_J^2}\mathbb{I}_{J}\otimes \mathbb{I}_{J}\otimes \sum_{a,a^{\prime} = 1}^{\mathfrak{C}_{J}^{N_A}}\ket{aa}\bra{a^{\prime}a^{\prime}}.
\end{align}
This formula reveals that this matrix is block diagonal into identity blocks with respect to the irreps and the matrix with all equal elements with respect to the multiplicities in $A$. The spectrum of such a matrix is simple, which leads to the following expression for the bipartite operator entanglement, 
\begin{align} \label{eq:op_ent}
    &O_{A:A^c}\left(\rho^{N}_{\triv}\right) = \sum_{J \in \mathcal{J}_{AA^c}}\frac{\mathfrak{C}^{N_A}_{J}\mathfrak{C}^{N_{A^c}}_{J^*}}{\mathfrak{C}_{\triv}^{N}}\log\left(\frac{d_J^{2}\mathfrak{C}_{\triv}^{N}}{\mathfrak{C}_{J}^{N_A}\mathfrak{C}_{J^*}^{N_{A^c}}}\right) \nonumber\\&= 2E_{A:A^c}\left(\rho^{N}_{\triv}\right) + \sum_{J \in \mathcal{J}_{AA^c}}\frac{\mathfrak{C}^{N_A}_{J}\mathfrak{C}^{N_{A^c}}_{J^*}}{\mathfrak{C}_{\triv}^{N}}\log\left(\frac{\mathfrak{C}_{\triv}^{N}}{\mathfrak{C}_{J}^{N_A}\mathfrak{C}_{J^*}^{N_{A^c}}}\right).
\end{align}

The above equation explicitly shows that the operator entanglement involves both quantum and classical contributions, thus is not a good measure of quantum entanglement. However, the operator entanglement can still be useful. For example, it is related to bond dimension of the matrix product density operator (MPDO) representation of the density matrix by $D = e^{O_{A:A^c}}$. Eq.~\eqref{eq:op_ent} implies that the MPDO bond dimension of an MMIS scales like $O(1)$ and $O(poly(N))$ for finite and continuous non-Abelian groups respectively, meaning that such mixed states can be efficiently simulated with tensor network algorithms, regardless of spatial dimensions. We discuss MPDO representations more carefully in Appendix \ref{appsec:MPO}. We will also show that MMISs can be purified to a matrix product state (MPS) with the purification rank~\cite{cirac_MPDO_BondDim} $D' \leq D$. The difference between MPDO representations of the thermal state in canonical (strong symmetry) and grand canonical (weak symmetry) ensembles was also studied in~\cite{Barthel_2016}.      

\section{Discrete non-Abelian symmetry: $S_{3}$} \label{sec:s3_entanglement}

As an example for discrete non-Abelian groups, we study the MMIS of the symmetric group $S_3$. Generally, the symmetric group $S_k$ is the group of all permutations of $k$ objects. For the case of $S_3$, it is also equivalent to the dihedral group $D_3$ which is the symmetry group of an equilateral triangle on a surface (say the $xy$-plane), whose elements are 3-fold rotations about the $z$-axis, and three reflections with respect to the medians of the triangle.  

$S_3$ has three irreps: two 1 dimensional irreps, namely the trivial (\triv) and the sign (\textsf{sgn}) representations, and one 2 dimensional standard representation (\twod). In the \textsf{sgn} representation, we assign $+1$ to rotation elements and $-1$ to reflection ones. There is a natural representation for the \textsf{2d} representation in terms of Pauli operators of a qubit: we can take the generators to be $R = e^{\frac{2\pi i}{3} Z}$ and $P_x = X$, where $X$ and $Z$ are Pauli matrices. The states in the \textsf{2d} irrep can be labeled by eignenvalues of rotation, $R \ket{\theta = \pm 2\pi/3} = e^{i \theta} \ket {\theta}, \: P_x \ket{\theta} = \ket{-\theta} $,
which in the above choice of the basis are $\ket{\theta = 2\pi/3} = \ket{\uparrow}$ and $\ket{\theta = -2\pi/3} = \ket{\downarrow}$. 

Consider a spin chain of $N$ qubits in the $\twod$ irrep of $S_3$, and the corresponding MMIS on this spin chain. The Clebsch-Gordan series of the tensor product Hilbert space decomposes as,
\begin{align}
    \mathcal{H}_{\twod}^{\otimes N} = \mathfrak{C}^{N}_{\triv}\mathcal{H}_{\triv}\oplus \mathfrak{C}^{N}_{\textsf{sgn}}\mathcal{H}_{\textsf{sgn}} \oplus \mathfrak{C}^{N}_{\twod}\mathcal{H}_{\twod}.
\end{align}

The multiplicities can be easily computed using the Schur orthogonality relations for the characters of the group (similarly as the strategy from Eq.~\eqref{eq:cg_invariant}), and we obtain,

\begin{align}\label{eq:s3_multiplicities}
\mathfrak{C}^{N}_{\triv} &= \mathfrak{C}^{N}_{\textsf{sgn}} = \frac{1}{6} 2^N + \frac{1}{3} (-1)^N\nonumber\\\mathfrak{C}^{N}_{\twod} &= \frac{1}{3} 2^N - \frac{1}{3} (-1)^N.
\end{align}

Applying these results to the expression  Eq.~\eqref{eq:mmis_entanglement}, we find that the half chain bipartite entanglement of formation and distillation for of the $S_3$ MMIS in the thermodynamic limit,
\begin{align}
E_{N/2:N/2}\left(\rho_{\triv}^{N}\right) \overset{N\to\infty}{\longrightarrow} \frac{2\log{2}}{3},  
\end{align} 
while the corresponding logarithmic negativity is given by,
\begin{align}
E^{\mathcal{N}}_{N/2:N/2}\left(\rho_{\triv}^{N}\right) \overset{N\to\infty}{\longrightarrow} \log{\frac{5}{3}} \quad \quad 
\end{align}

The fact that that the entanglement remains $O(1)$ in the thermodynamic limit suggests that the state is a mixture of short-range entangled states. We probe this in two ways. Firstly, we show in Appendix Sec.~\ref{appsec:s3_correlations} that the two point function for an operator charged under $S_3$, such as $Z$ operator is exponentially suppressed, $\Tr\left[\rho^{N}_{\triv}Z_i Z_j\right] \sim \exp(- \alpha N)$. This observation shows it is short-range correlated with respect to this operator. Furthermore, in Appendix Sec.~\ref{appsec:s3_channel} we explicitly construct a $S_3$ symmetric quantum channel with unitary Kraus operators, and numerically demonstrate that such a channel generates this steady state in $O(1)$ time. This proves that such a state is indeed short-range entangled. 


\begin{figure}
    \centering
    \includegraphics[width = 0.8\columnwidth]{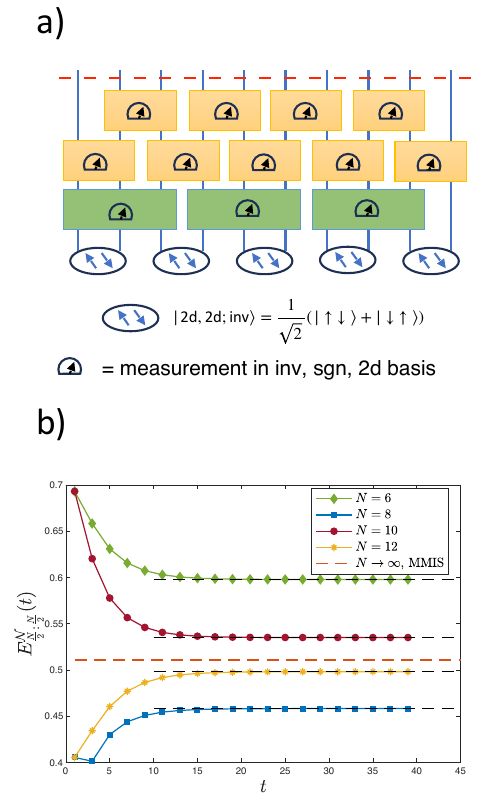}
    \caption{(a) $S_3$ symmetric channel with an open boundary condition. Each gate is a measurement in the basis of the total $S_3$ charges ($\triv$, $\sgn$, and $\twod$) of the qubits on which the gate is applied. The system is initialized with the nearest neighbor pairs in the $\triv$ sector. One can check that the invariant singlet state of two qubits in the $\{\ket{\uparrow},\ket{\downarrow}\}$ basis introduced in Sec.~\ref{sec:s3_entanglement} is given by $\frac{1}{\sqrt{2}} (\ket{\uparrow \downarrow} + \ket{\downarrow \uparrow}$). Each time step is identified with every 3 measurement layers. The steady state of this channel is the $S_3$ MMIS as discussed in Appendix \ref{appsec:s3_channel} (b) The half-chain logarithmic negativity as a function of time for different system sizes. The black dashed lines are the analytical steady states values of the logarithmic negativity. The plot indicates that the MMIS negativity remains finite as the system size increases.}
    \label{fig:S3_neg_vs_time}
\end{figure}




\section{Continuous non-Abelian symmetry}\label{sec:lie_mmis}
\subsection{$SU(2)$}\label{sec:lie_mmis_su2}

In the $SU(2)$ case, the multiplicities of the different spin$-j$ representations in a chain of qubits are well known, for example, via the Schur-Weyl duality. In particular, in a chain of $2N$ qubits, the multiplicities are given by~\cite{livine_entanglement_2005, livine_quantum_2006},
\begin{align}\label{eq:su2_multiplicity}
\mathfrak{C}_{j}^{2N} = {2N \choose N + j}\frac{2j+1}{N+j+1}.
\end{align}

By an exact computation in the thermodynamic limit $N\gg 1$ using Eq.~\eqref{eq:mmis_entanglement} and saddle point approximation, we find that the half-system entanglement (entanglement of formation, distillation, and logarithmic negativity) is,
\begin{align}
E_{N:N}\left(\rho_{\triv}^{2N}\right) \overset{N\to\infty}{\longrightarrow} \frac{1}{2}\log N + O(1)
\end{align}

A generalization of the result for $SU(d)$ groups using Schur-Weyl duality is provided in Appendix~\ref{appsec:su(d)}.

\subsection{Compact Semisimple Lie groups}\label{sec:lie_mmis_gen}
The bipartite entanglement of formation of the maximally mixed invariant state can be evaluated for generic symmetries described by compact semisimple Lie groups as well. We can rewrite the expression for the bipartite entanglement of formation as a weighted sum, $E_{A:B}\left(\rho_{\triv}^{N}\right) = \sum_{J\in \mathcal{J}_{AB}}p(J) \log d_{J}$, with the probability distribution, $p(J) = \frac{\mathfrak{C}_{J}^{N_{A}}\mathfrak{C}_{J^{*}}^{N_{B}}}{\mathfrak{C}_{\triv}^{N}}$. This is a probability distribution, since, by definition, $\sum_{J\in \mathcal{J}_{AB}}p(J) = 1$. In the thermodynamic limit, for a bipartition such that $N_{A}, N_{B},N\gg 1$, we can approximate the weighted sum by the most likely value from saddle point analysis, $E_{A:B}\left(\rho_{\triv}^{N}\right)\approx \log d_{\overline{J}}$ where $\overline{J}$ maximizes the probability distribution $ \overline{J} = \text{argmax}_{J \in \mathcal{J}_{AB}}p(J)$.

First we discuss the asymptotic scaling of irrep dimensions. Suppose the dimension and the rank of the Lie algebra corresponding to the Lie group be $\dim \mathfrak{g}$ and $\dim \mathfrak{h}$ respectively. Assume the dimension of the local Hilbert space $\mathcal{H}_{d}$ be $d$, which carries an irreducible representation of the Lie group. By the weight theory of irreps of complex semisimple Lie algebra~\cite{hall_lie_2015}, every irrep is uniquely labeled by its highest weight $w$, which is a vector in the vector space of its Cartan subalgebra. The weight $w$ is represented as an $M$ dimensional vector; and the highest weight corresponding to an irrep is a vector of non-negative integers. The tensor product space of $N$ $d$-dimensional irreps, $\mathcal{H}_{d}^{\otimes N}$, can be decomposed into a tensor sum of irreps. We are interested in the asymptotic scaling of dimensions of irreps in the tensor sum, in the thermodynamic limit $N\gg 1$. The largest weight of an irrep that can be accommodated in this tensor product space has elements scaling as $O(N)$. 

We now show that if the elements of the weight vector scales asymptotically as $O(n)$, the dimension of the irrep scales as $O(n^{(\dim \mathfrak{g}-\dim \mathfrak{h})/2})$. We can justify this using the fact that the number of positive roots of the root system associated with the Lie group is $(\dim \mathfrak{g}-\dim \mathfrak{h})/2$. By the Weyl dimension formula, the dimension of the irrep is given by a polynomial of degree equal to the number of positive roots, i.e. $(\dim \mathfrak{g}-\dim \mathfrak{h})/2$. When the elements of the weight vector scale as $O(n)$, the dimension thus scales polynomially as $O((\dim \mathfrak{g}-\dim \mathfrak{h})/2)$.
\begin{figure}
    \centering
    \includegraphics[width = \columnwidth]{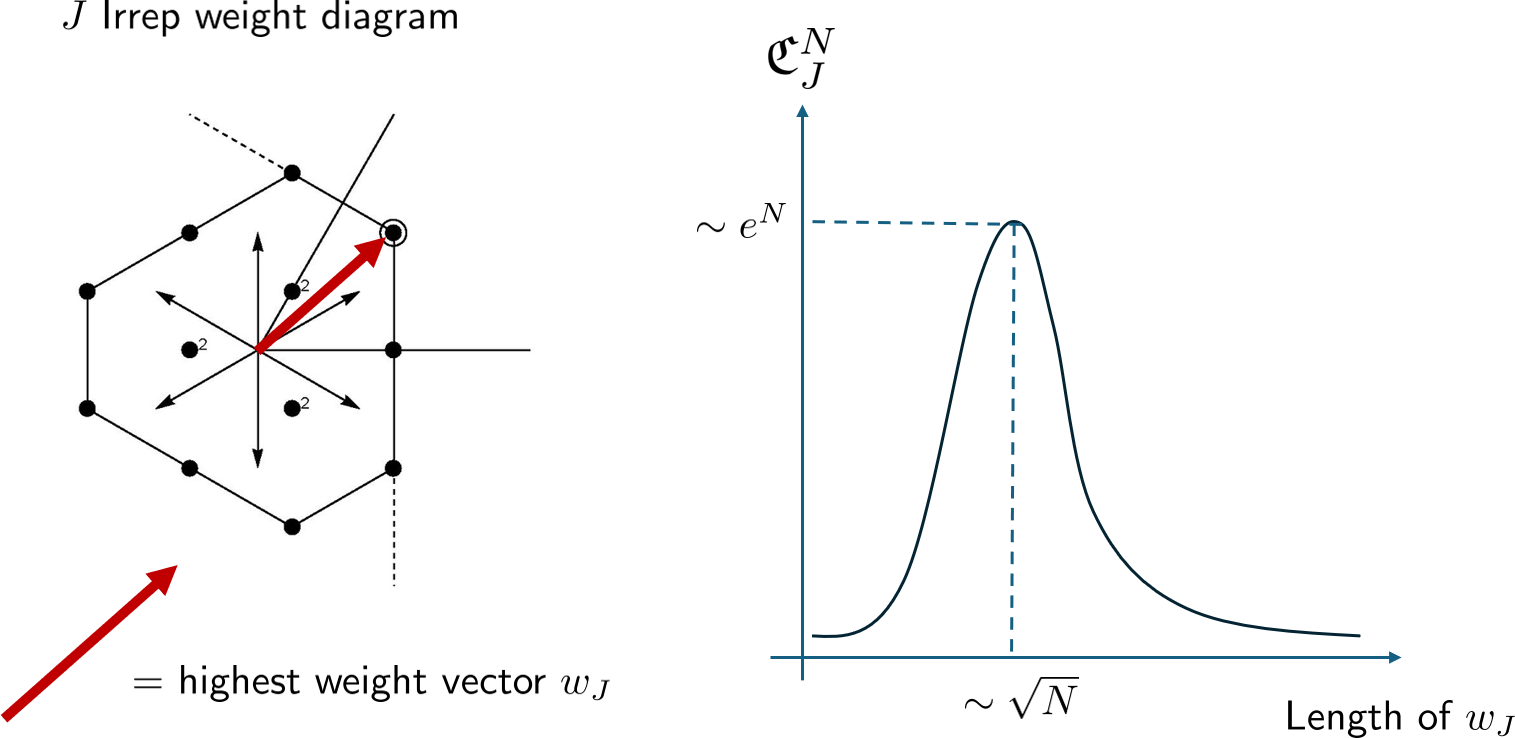}
    \caption{Any irrep $J$ of compact semisimple Lie algebras can be represented using a finite dimensional lattice of weights in a $\dim \mathfrak{h}$ dimensional space, where $\dim \mathfrak{h}$ is the rank of the Lie algebra. Each irrep is uniquely characterized by a highest weight vector, $w_J$. In the tensor product space $V_d^{\otimes N}$, the multiplicity of $J$ irrep, $\mathfrak{C}^{N}_{J}$ peaks when the length of the $w_J$ vector scales as $\sim\sqrt{N}$~\cite{biane_estimation_1993, tate_lattice_2003, feigin_large_2019}, in the thermodynamic limit $N\to \infty$.}
    \label{fig:gen_lie}
\end{figure}
Next, we show that the probability of the multiplicities, $p(J)$ is maximized for weights $\sim \sqrt{N}$, when $N_A, N_B \gg O(\sqrt{N})$. Here we are interested in the multiplicity $\mathfrak{C}_{J}^{N}$, for irrep $J$ whose weight vector elements scale asymptotically as $~n$ for some large integer $n$, and where $N\gg 1$. As shown in several mathematics papers~\cite{biane_estimation_1993, tate_lattice_2003, feigin_large_2019}, near the maxima of this multiplicity, the asymptotic scaling is a Gaussian of the form $\mathfrak{C}_{J\sim n}^{N} \sim O\left(\text{poly}(n)\exp(-n^{2}/N)\right)$. This is maximized for $n\sim \sqrt{N}$ asymptotically. This result can be understood intuitively by thinking of the representation as a random walk on the weight lattice. A schematic of the highest weight vector and the scaling of the multiplicity in the tensor product space is shown in Fig.~\ref{fig:gen_lie}. Combining all the results, we get that the asymptotic result for the bipartite entanglement of formation for $N = N_{A} + N_B$ bipartition, where $N_{A},N_B,N \gg 1$ and without loss of generality, $N_A \leq N_B$, 
\begin{align}
    E_{N_A:N_B}(\rho^{N}_{\triv}) &\approx \log \sqrt{N_A}^{(\dim \mathfrak{g}-\dim \mathfrak{h})/2} + O(1)\nonumber \\
    &= \frac{\dim \mathfrak{g}-\dim \mathfrak{h}}{4}\log{N_A} + O(1). \label{eq:log_EF}
\end{align}

Note, for $SU(d)$ groups, we have $\dim \mathfrak{g} = d^2 -1$ and $\dim \mathfrak{h} = d-1$, which leads to the asymptotic result $\frac{d(d-1)}{4}\log N$. In Appendix~\ref{appsec:su(d)} we provide an alternate derivation of this result using explicit computation of the multiplicities and dimensions of the irreps of $SU(d)$ in a chain of $d$ dimensional qudits, using tools from Schur-Weyl duality. In the next section, we discuss the $SU(2)$ case in more detail.

\subsection{Entanglement at finite temperature}\label{sec:thermal_mmis}

The maximally mixed invariant states can be viewed as infinite-temperature thermal states subject to constraints on total symmetry charges (for $U(1)$ symmetry this would be usual ``canonical ensemble''). Given a Hamiltonian $H$, we can consider such invariant thermal states at finite temperature
\begin{equation}
    \rho_T=\frac{1}{Z}e^{-\frac{H}{T}}P_{\triv},
\end{equation}
where $P_{\triv}$ is the projector to the symmetry invariant subspace. For a generic Hamiltonian, the thermal state is complicated to analyze, and it is unclear whether a finite-temperature state will be long-range entangled with $O(\log N)$ entanglement. Below, we will consider a solvable example, which has only $O(1)$ entanglement unless $T\gtrapprox O(N)$ (the ``infinite temperature'' regime). 

The example we consider is an $SU(2)$ spin system with a spin-$S$ moment per site, with the Hamiltonian
\begin{equation} \label{eq:Hamiltonian}
    H=(\sum_{i\in A}S_i)^2+(\sum_{j\in A^c}S_j)^2,
\end{equation}
where $A,A^c$ is a bi-partition of the system. The model is designed so that the thermal states take a simple form similar to Eq.~\eqref{eq:mms_decomp}:
\begin{equation}
    \rho_T\propto\sum_{J,a,b} e^{-\frac{2J(J+1)}{T}}
    \ket{JJ^{*},a,b}_{AA^c} \bra{JJ^{*},a,b}_{AA^c}.
\end{equation}
In particular, the state can be written as the convex sum of locally distinguishable pure states. This allows the arguments in Sec.~\ref{sec:entanglement_equality} to apply, from which we conclude that the entanglement of formation and distillation are equal and given by
\begin{align}
\label{eq:entanglement_finiteT}
    E_{A:A^c}\left(\rho_{\triv}^{N}\right) = \frac{1}{\mathcal{Z}}\sum_{J\in \mathcal{J}_{AA^c}}e^{-\frac{2J(J+1)}{T}}\mathfrak{C}_{J}^{N_{A}}\mathfrak{C}_{J^*}^{N_{A^c}} \log d_{J},
\end{align}
where $\mathcal{Z} = \sum_{J\in \mathcal{J}_{AA^c}}e^{-\frac{2J(J+1)}{T}}\mathfrak{C}_{J}^{N_{A}}\mathfrak{C}_{J^*}^{N_{A^c}}$. We can now consider two different temperature regimes:
\begin{enumerate}
    \item $T\gtrapprox \Omega(N)$ (the ``infinite temperature'' regime): similar to the maximally mixed invariant states, the sum in Eq.~\eqref{eq:entanglement_finiteT} is dominated by $J\sim \sqrt{N}$, so $E_{A:A^c}\sim \log N$.
    \item $T\ll N$ (the ``finite temperature'' regime): the sum in Eq.~\eqref{eq:entanglement_finiteT} is cutoff at $J\sim \sqrt{T}$, so we should have $E_{A:A^c}\sim \log T$, which does not scale with $N$.
\end{enumerate}
Our result suggests that infinite-temperature invariant states form a universality class distinct from any finite-temperature states. The intuition is that large entanglement from large spin states will be exponentially suppressed at any finite-temperature. Even though the particular Hamiltonian Eq.~\eqref{eq:Hamiltonian} is by no means generic (it is highly fine-tuned, non-local, and not translationally invariant), our conclusion may hold more generally. 



Our discussion makes the MMIS similar to the usual quantum criticality at zero temperature: if a $(1+1)d$ conformal field theory emerges at zero temperature, the entanglement scales with $\log N$, but at any  finite temperature the entanglement negativity scales as $O(1)$\cite{CardyNegativity}.

Another interesting question is whether there is some special structure -- such as translation invariance and Lieb-Schultz-Mattis type of constraints -- that will guarantee (at least) $\log N$ entanglement at finite temperature. We leave this question for future study.


\section{Preparation and stability of MMIS}

The long-range correlations and the entanglement structure of MMIS of continuous non-Abelian symmetries lead us to consider two (related) questions:

1) How easy is it to prepare such a long-range entangled mixed states from an unentangled pure product state?

2) How stable is such long-range entanglement to perturbations on MMIS?

In the following sections, we answer both of these questions. In Sec. \ref{sec:preparation}, we bound the preparation time by $O(L)$ under local channels due to long-range correlations decaying slower than exponential, and under local unitaries and measurements with nonlocal feedback by $O(\log L)$ from the entanglement results. In Sec. \ref{sec:stability_swssb}, we prove that the MMIS is part of a stable symmetry protected mixed phase characterized by a novel type of SSB, known as strong-to-weak SSB.

\subsection{Bounds on preparation time}\label{sec:preparation}

We consider two strategies to prepare such a state.

\textbf{Local channel preparation: }In these protocols, we consider evolution with local channels (which can be a channel with brickwork Kraus operators, or a Lindbladian evolution with local Jump operators). In such a scenario, the
Lieb-Robinson bound~\cite{lieb_finite_1972} holds, which limits how quickly correlations can build under local evolution~\cite{Bravyi_2006}. The Lieb-Robinson bound was originally formulated for local unitary evolution; however, it can be generalized to local channels~\cite{Poulin_2010}. The same bound holds in general, since the local channel can be purified into a local unitary operation followed by tracing out of the local environment. Under such evolutions, the connected correlation function is bounded as~\cite{Bravyi_2006},
\begin{align}
    \langle O_{i}O_j \rangle_c \leq c \exp[\frac{vt-d_{ij}}{\xi}],
\end{align}
for $O(1)$ constant $c$, Lieb-Robinson velocity $v$, and correlation length $\xi$. In Sec.~\ref{sec:mms_correlations} we showed that the connected correlations between the Lie algebra generators scale as $1/N$, for local qudits separated by distance $N$. This immediately implies that the time-scale of preparing such a state using local strategies scales at least linearly with $N$, i.e. $\Omega(N)$.


\textbf{Local adaptive preparation:} Since we established that the MMISs are also generically highly entangled, we can make a stronger statement about any local adaptive operations to produce such a state from pure product states. By local adaptive operations, we refer to finite-depth circuits with local unitaries and measurements and possibly nonlocal classical communication. Due to nonlocal communication, quantum and classical correlations at long distances can be established, as examplified by the quantum teleportation protocol. This in turn invalidates our previous argument based on connected correlation functions. Nevertheless, we will prove that the entanglement of formation of quantum many-body mixed states serves as a lower bound for the depth (or time) required to prepare a certain state starting from an product state via local operations and classical communication.\footnote{We note that \cite{lu_mixed-state_2023} used a similar argument to bound the R\'enyi entanglement entropy of formation by the adaptive preparation depth.}

To see this, suppose first there is no measurement and no classical communication. Then, we return to the scenario considered above with local channels, where there is a local quantum circuit $U$ with depth $D$ acting on the system plus ancilla $\mathcal{H} \otimes \mathcal{H}'$ such that $\rho = \Tr_{\mathcal{H}'}[U \ketbra{\psi_0}{\psi_0} U^\dagger]$, where $\ket{\psi_0}$ is any product state in $\mathcal{H} \otimes \mathcal{H}'$. Given a bipartition $A:B$ of $\mathcal{H}$, then from the locality of $U$ we know that entanglement entropy of $U \ket{\psi_0}$ in the region $A A'$ is $O(|\partial A| D)$, where $A'$ is the corresponding region in $\mathcal{H}'$. Finally, since the entanglement of formation $E^F_{AA':BB'}$ is a monotone with respective to local operations on $AA'$ and $BB'$, then
\begin{align}
    E^F_{A:B}(\rho) & = E^F_{A:B}(\Tr_{A'}\circ \Tr_{B'}(U \ketbra{\psi_0}{\psi_0} U^\dagger)) \nonumber \\ 
    & \leq E^F_{AA':BB'}(U\ketbra{\psi_0}{\psi_0} U^\dagger) \nonumber \\
    & = O(|\partial A| D), \label{eq:depth_bound}
\end{align}
implying the depth $D$ is larger than a constant multiple of the entanglement of formation. More generally, the same argument presented here can be used to prove that the entanglement of purification\footnote{``Entanglement'' here is a misnomer when compared to entanglement of formation, since $E_P$ measures both quantum and classical correlations.} $E_P$, defined as 
\begin{equation}
E_P(\rho_{AB}) = \min_{\stackrel{\ket{\psi}_{AA'BB'}}{\Tr_{A'B'} \ket{\psi} = \rho}} S_{AA'}(\ket{\psi}),
\end{equation}
is bounded from below by the entanglement of formation \cite{terhal_entanglement_2002}. 

Now, consider a circuit of unitaries and measurements. Let us note by $\vb{m}$ a possible set of joint measurement results, with $E_{\vb{m}}$ the corresponding Kraus operator that acts on the initial state $\rho_0 = \ketbra{\psi_0}{\psi_0}$ by $\rho_{\vb{m}} = p_{\vb{m}}^{-1} E_{\vb{m}} \rho_0 E_{\vb{m}}^\dagger$ when $\vb{m}$ is observed with probability $p_{\vb{m}}$. Since each $E_{\vb{m}}$ is still composed of small gates (now not necessarily unitary), then $\rho_{\vb{m}}$ is still area-law. This argument follows from noting that the growth in the R\'enyi-0 entropy of the state (which upper bounds the entanglement entropy) is obtained by singular value decomposition of the circuit across a cut, which remains area law\footnote{See Appendix D of \cite{luMeasurementShortcutLongRange2022} for a more detailed proof.}.

From this fact and the convexity of $E^{F}$, we have
\begin{align}
    E^F_{A:B}(\rho) & = E^F_{A:B}(\sum_{\vb{m}} p_{\vb{m}} \Tr_{A'}\circ \Tr_{B'}(\rho_{\vb{m}})) \nonumber \\
    & \leq \sum_{\vb{m}} p_{\vb{m}} E^F_{A:B}(\rho_{\vb{m}}) \nonumber \\
    & = O(|\partial A| D), \label{eq:depth_bound_adaptive}
\end{align}
thus bounding from below again the preparation depth $D$ by the entanglement of formation.

Since the MMIS with continuous non-Abelian symmetries have entanglement of formation scaling of $\log |A|$ by Eq.~\eqref{eq:log_EF}, then any adaptive preparation must have depth of order $O(\log N)$. We conjecture that this bound can be saturated, and we leave this for future exploration.


\subsection{Stability and strong-to-weak SSB}\label{sec:stability_swssb}

MMIS is the unique steady state of all strong-symmetric unital complete channels (as was shown in Sec.~\ref{sec:channel_thm}). This implies that it is stable to all perturbations that satisfy these conditions. However, we can even relax the unitality condition, and infer that the MMIS forms a mixed state phase which is stable to all short depth symmetric channels.

This is because the MMIS displays a new type of spontaneous symmetry breaking, which has been recently studied in the context of mixed states: strong-to-weak SSB (SW-SSB) \cite{lee2023quantum,MaetalSSB,lessaStrongtoWeakSpontaneousSymmetry2024, salaSpontaneousStrongSymmetry2024}. A state has strong-to-weak SSB if it has a strong symmetry that is broken down to a weak one, as detected by a nonlinear long-range correlator. One such correlator that is preserved against local channels, and thus defines SW-SBB for the entire symmetric mixed phase of matter, is the so-called fidelity correlator
\begin{equation}\label{eq:fidelity_def}
    F_O(i, j) \equiv F(\rho, O_i O_j^\dagger \rho O_j O_i^\dagger),
\end{equation}
where $F(\rho, \sigma) = \Tr \sqrt{\sqrt{\rho} \sigma \sqrt{\rho}}$ is the fidelity and $O_i$ is a local charged operator. In the case $O$ is part of higher dimensional representation, $U(g) O_\alpha U(g)^\dagger = \sum_\beta \mathcal{U}_{\alpha \beta} O_\beta$, then we substitute $O_i O_j$ by the symmetry-invariant quadratic operator around sites $i$ and $j$, $O^{(2)}_{i,j} \equiv \sum_{\alpha} O_{i,\alpha} O^\dagger_{j, \alpha}$. Then, by Schur's lemma, $O^{(2)}_{i,j}$ acts only on the multiplicity space of each irrep, allowing us to calculate the fidelity correlator of any maximally mixed state $\rho_J$ exactly:
\begin{align}\label{eq:fidelity_corr_mms}
    F_O(i, j) & = \Tr \sqrt{\sqrt{\rho_J} O^{(2)}_{i,j} \rho_J (O^{(2)}_{i,j})^\dagger \sqrt{\rho_J}} \nn \\
    & = \frac{1}{d_J} \Tr[ P_J \sqrt{O^{(2)}_{i,j} (O^{(2)}_{i,j})^\dagger}] \nn \\
    & = \Tr[\rho_J\ \left|O^{(2)}_{i,j}\right|],
\end{align}
which is strictly positive as $O^{(2)}_{i,j} \neq 0$ 
, and is independent of the sites $i \neq j$ due to the permutation symmetry of $\rho_J$ (See Lemma \ref{lemma:permutation_invariance}).

An instance of such symmetry-invariant quadratic operator which allows us to calculate the fidelity correlator exactly is the operator constructed out of generators of Lie algebra, as in Eq.~\eqref{eq:casimir_correlator}. Given $(X_{i,\alpha})_{\alpha=1}^{\dim \mathfrak{g}}$, an orthonormal basis of the Lie algebra $\mathfrak{g}$ of $G$ with respect to the Killing form, for each local qudit $i$, we consider the quadratic operator,
\begin{align}
    O^{(2)}_{i,j} \equiv \sum_{\alpha}U(X_{i,\alpha})U(X^{\dagger}_{j,\alpha}).
\end{align}
Let $c_{J}$ the Casimir eigenvalue corresponding to the irrep $J$. It can be easily confirmed that $O^{(2)}_{i,i} + O^{(2)}_{i,j} + O^{(2)}_{j,i}+ O^{(2)}_{j,j}$ is the representation of the quadratic Casimir element on the 2 qudits $i,j$. By decomposing it in terms of projectors on different irreps, we get the relation,
\begin{align}\label{eq:fidelity_quadratic_op}
    O^{(2)}_{i,j} = \sum_{J}\left(\frac{c_{J}-2c_{\mathcal{V}}}{2}\right)P_{J}^{(i,j)},
\end{align}
where $c_{\mathcal{V}}$ is the Casimir eigenvalue for the irrep defining the local qudit, and $P_{J}^{(i,j)}$ is the projector of the Hilbert space of $i,j$ qudits on to the symmetry sector $J$. This relation, when restricted to $SU(2)$ and qubits, is the familiar relation that the Heisenberg term of two spin-$\frac{1}{2}$ decomposes as, $\vb{S}_i \cdot \vb{S}_j = -3/4 P_{0} +1/4 P_{1} $.

Now, we can compute the fidelity correlator of the MMIS using Eq.~\eqref{eq:fidelity_corr_mms}, and the bipartite decomposition of $\rho_\triv^{N}$ in Eq.~\eqref{eq:mms_decomp}. The relevant bipartite decomposition of $N$ qudits is: $(i,j):(i,j)^c$, where $(i,j)$ refer to the $i,j$ qudits in the definition of the fidelity correlator. Eq.~\eqref{eq:fidelity_corr_mms} evaluated on MMIS and with the quadratic invariant operator Eq.~\eqref{eq:fidelity_quadratic_op} is given by,
\begin{align}
    F_O(i, j)\left(\rho^{N}_\triv\right) = \sum_{J \in \mathcal{J}_{(i,j)(i,j)^{c}}}\left|\frac{c_J-2c_{\mathcal{V}}}{2}\right|\frac{\mathfrak{C}^{2}_{J}\mathfrak{C}^{N-2}_{J^{*}}}{\mathfrak{C}_\triv^{N}}.
\end{align}
The sum is over finitely many $O(1)$ numbers (recall the normalization relation, Eq.~\eqref{eq:multiplicity_factorization}), and is thus generically $O(1)$ even at the thermodynamic limit $N\to \infty$. 

For $SU(2)$, the multiplicities can be computed exactly using Eq.~\eqref{eq:su2_multiplicity}, and the Casimir eigenvalues are given by $c_{j} = j(j+1)$ for spin$-j$ irreps. The fidelity correlator in the thermodynamic limit can be computed using these formulas,
\begin{align}
F^{SU(2)}_O(i, j)\left(\rho^{N}_\triv\right) \overset{N\to\infty}{\longrightarrow} \frac{3}{8},  
\end{align}
while the spin-spin correlation decays algebraically as $1/N$ with $N \gg 1$, as shown in Eq.~\eqref{eq:su2_correlation}. This behavior of constant fidelity correlator and algebraically decaying linear correlator persists for all continuous non-Abelian Lie groups. This implies that such states exhibit SW-SSB in the thermodynamic limit.


For discrete non-Abelian groups, we can exactly compute the linear and quadratic\footnote{Long-range quadratic correlators also probe SW-SSB, similarly to the fidelity of Eq. \eqref{eq:fidelity_def}. See \cite{lessaStrongtoWeakSpontaneousSymmetry2024, salaSpontaneousStrongSymmetry2024} for more details.} correlators for the MMIS for some specific examples. In Sec. \ref{appsec:s3_correlations}, we show two examples of quadratic operators for the $S_3$ group with constant fidelity correlator but linear correlator that is exponentially small in system size. Hence, the MMIS of $S_3$ has SW-SSB. We expect this to happen to all non-Abelian discrete groups. 

These results give further evidence to the conjecture made in \cite{lessaStrongtoWeakSpontaneousSymmetry2024} that all thermal states in the canonical ensemble -- having strong symmetry -- at nonzero temperature exhibit SW-SSB if they don't have conventional SSB already. Our calculations confirm this hypothesis in the infinite temperature limit and it would also confirm for high enough temperature states if they are in the same mixed-state phase as the infinite temperature state.

In~\cite{lessaStrongtoWeakSpontaneousSymmetry2024} it was further proven that states with SW-SSB cannot be brought to a symmetric pure product state using a symmetric low depth $O(PolyLog(N))$ circuit. This proves that the SW-SSB nature of the MMIS is stable to symmetric low-depth perturbations. However, this argument does not imply that the entanglement structure is also stable to such perturbations.

\section{Concluding Remarks}
In this work we have identified strong non-Abelian symmetry of quantum channels as a sufficient condition for long-range entanglement (with logarithmic scaling with subsystem size) in mixed steady states of such channels. 

There are several directions to consider in the future. We have established that such steady states are unique for symmetric unital channels and the strong-weak SSB is stable to symmetric short depth channels. However, symmetric non-unital channels can lead to distinct steady states. It is not apriori clear if local symmetric non-unital channels can lead to steady states without such long-range entanglement. That the projector onto the symmetric space for semisimple Lie groups is highly entangled is reminiscent of the fact that the singlets under such groups provide a decoherence free subspace for symmetric noise~\cite{Lidar_2003}, which as a quantum error correcting code has infinite distance (for such restricted symmetric noise). This observation may be useful to prove that the steady states with local non-unital symmetric perturbation may also be long-range and entangled. In this work we have also focused on the maximally mixed invariant steady state arising due to strong on-site symmetry of the channel. However, interesting non-stationary behavior may arise due dynamical symmetries which lead to persistent oscillations due to eigenmodes of the channel with purely imaginary eigenvalues~\cite{Buca_2019, Buca2022}. It is an interesting open direction to study whether under non-Abelian continuous dynamical symmetries the mixed states also undergo non-stationary entanglement behavior.

We also proved a lower bound of $\log N$ depth to adaptively prepare the maximally mixed invariant state for the case of continuous semisimple non-Abelian symmetries. It is not clear whether such a bound may be saturated. 

To prove the equality of the entanglement measures of formation and distillation for the MMIS and to quantify them, we have relied on the property of local distinguishability via LOCC, as previous works on the $SU(2)$ symmetry have as well \cite{livine_entanglement_2005, livine_quantum_2006}.  Although symmetry sectors other than the invariant one may not share this feature (See Sec. \ref{sec:other_sym_sectors}), it would be interesting to study other states with the same property, irrespective of symmetry, particularly because of its extensibility to mixed many-body systems. 

Another interesting perspective comes from considering strongly symmetric thermal states. In particular, the MMIS of $SU(2)$ can be considered as the infinite temperature canonical ensemble state, where the bath exchanges energy but no $SU(2)$ charge. Our result shows that such a thermal state is highly entangled at infinite temperature, which is in high contrast with the usual (grand canonical) Gibbs ensemble $e^{-\beta H}/\mathcal{Z}$, where the strong symmetry is explicitly broken. Indeed, it has recently been shown that Gibbs states of local Hamiltonians at high enough temperatures $T > T_c$ are fully separable, efficiently preparable, and easy to sample~\cite{bakshi_high-temperature_2024}. Hence, one possible use of strongly symmetric thermal states would be as resources for quantum advantage experiments. They could be preparable on a given quantum platform if, for example, the interaction with an environment at finite temperature is naturally constrained by a symmetry. Note, however, that further analysis on the sampling complexity for such states is needed. Finally, symmetry-enforced entanglement could imply a mixed-state phase transition going from low to high temperatures, due to possibly distinct scaling patterns of entanglement.

It is also interesting to compare the behavior of logarithmic negativity and entanglement of formation in such symmetric mixed states. Generally, quantities such as entanglement of formation are extremely hard to compute for many-body systems; however, the MMIS of generic symmetries is a class of non-trivial states for which the entanglement of formation can be exactly computed. In the examples considered here, we find that the bipartite logarithmic negativity and entanglement of formation match in the thermodynamic limit for large enough bipartitions; however it will be interesting to find cases where they are parametrically different. The maximally mixed invariant states provide a simple non-trivial class of states to compare the different measures of mixed state entanglement.

\textbf{Acknowledgement: }SS thanks Arghya Sadhukhan, Tarun Grover, and Pablo Sala for useful discussions, and Aspen Center for Physics, supported by National Science Foundation grant PHY-2210452, where the project was initiated. LAL acknowledges supports from the Natural Sciences and Engineering Research Council of Canada (NSERC) through Discovery Grants. AM thanks Thomas Scaffidi for insightful discussions. Research at PI is supported in part by the Government of Canada through the Department of Innovation, Science and Economic Development Canada and by the Province of Ontario through the Ministry of Colleges and Universities.

\textit{Note added: }While this work was being completed, we became aware of a related work by Li, Pollmann, Read, and Sala~\cite{li_highly2024}, which will appear in the same arXiv posting.

\appendix

\section{Proofs for Sec.~\ref{sec:exp_values}} \label{appsec:exp_values_proof}


In this section, we prove the statement that in the thermodynamic limit $N\to \infty$ and a constant $k \sim O(1)$ body reduced density matrix, $J$ irreps appear in $\mathcal{V}^N$ with multiplicities $0$ or  
\begin{equation}\label{eq:appendix_mults}
    \mathfrak{C}^{N}_{J} = f_{J}d_{\mathcal{V}}^N \times %
    \begin{cases}
        (1+O(e^{-\alpha N})), & \text{(discrete non-Abelian)} \\
        \frac{1}{N^\beta}(1+O(N^{-1})), & \text{(semisimple Lie group)} \\ 
    \end{cases}
\end{equation}
for $f_J$ a coefficient, $\alpha > 0$ and $\beta = \dim G / 2$.

Using this result, we sketch a proof for Theorem~\ref{thm:local_indist} in Sec.~\ref{sec:exp_values}.

\subsection{Discrete non-Abelian symmetries}

Before we prove \ref{eq:appendix_mults} for finite groups, we will first prove a lemma regarding the maxima of group characters:
\begin{lemma}\label{lemma:center_faithful}
    Given an irrep $\mathcal{V}$,
    \begin{equation}
        g \in Z(G) \Rightarrow |\chi_\mathcal{V}(g)| = d_\mathcal{V},
    \end{equation}
    and the converse is also true if $\mathcal{V}$ is faithful.
\end{lemma}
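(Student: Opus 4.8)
The plan is to prove both implications using Schur's lemma together with the unitarity of the representation $U_{\mathcal{V}}$, which we may always assume for a finite group (and more generally for a compact group via averaging of the inner product).

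For the forward implication, I would observe that if $g \in Z(G)$ then $g$ commutes with every $h \in G$, so $U_{\mathcal{V}}(g)$ commutes with $U_{\mathcal{V}}(h)$ for all $h$. Since $\mathcal{V}$ is irreducible, Schur's lemma forces $U_{\mathcal{V}}(g) = \lambda\, \mathbb{I}$ for some scalar $\lambda$, and unitarity gives $|\lambda| = 1$. Taking the trace then yields $\chi_{\mathcal{V}}(g) = \lambda\, d_{\mathcal{V}}$, so that $|\chi_{\mathcal{V}}(g)| = d_{\mathcal{V}}$.

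For the converse, assume $\mathcal{V}$ is faithful and $|\chi_{\mathcal{V}}(g)| = d_{\mathcal{V}}$. Because $U_{\mathcal{V}}(g)$ is unitary, it is diagonalizable with eigenvalues $\lambda_1, \ldots, \lambda_{d_{\mathcal{V}}}$ lying on the unit circle. Writing $\chi_{\mathcal{V}}(g) = \sum_k \lambda_k$, the triangle inequality gives $|\chi_{\mathcal{V}}(g)| \le \sum_k |\lambda_k| = d_{\mathcal{V}}$, with equality precisely when all the $\lambda_k$ share a common phase, i.e.\ $\lambda_1 = \cdots = \lambda_{d_{\mathcal{V}}} = \lambda$. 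Combined with diagonalizability, this forces $U_{\mathcal{V}}(g) = \lambda\, \mathbb{I}$.

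Finally I would invoke faithfulness: since $U_{\mathcal{V}}(g) = \lambda\, \mathbb{I}$ is central in the image of the representation, $U_{\mathcal{V}}(gh) = U_{\mathcal{V}}(hg)$ for every $h \in G$, and injectivity of the faithful representation promotes this to $gh = hg$ for all $h$, i.e.\ $g \in Z(G)$. The only steps that warrant care are the equality condition in the triangle inequality (all eigenvalues in phase) and the promotion of a central matrix to a central group element—precisely where faithfulness is indispensable—but neither constitutes a genuine obstacle, so I expect the proof to be short and self-contained.
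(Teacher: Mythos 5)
Your proof is correct and follows essentially the same route as the paper's: Schur's lemma plus unitarity for the forward direction, and the equality case of the triangle inequality for unit-circle eigenvalues (together with diagonalizability of unitaries) plus faithfulness for the converse. The only cosmetic difference is that the paper phrases the final step via $U_{\mathcal{V}}(g^{-1}hgh^{-1}) = \mathbb{I}$ and triviality of the kernel, whereas you phrase it as injectivity applied to $U_{\mathcal{V}}(gh) = U_{\mathcal{V}}(hg)$ — these are the same argument.
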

\begin{proof}
    ($\Rightarrow$) If $g \in Z(G)$, then $U_\mathcal{V}(g)$ is an intertwiner of $\mathcal{V}$. Thus, by Schur's lemma, $U_\mathcal{V}(g) = \lambda_g \mathbb{I}$ and $|\chi_V(g)| = d_\mathcal{V} |\lambda_g| = d_\mathcal{V}$.

    ($\Leftarrow$) If $|\chi_V(g)| = d_\mathcal{V}$, then $U_\mathcal{V}(g) = \lambda_g \mathbb{I}$ must be a phase multiple of the identity, since $\chi_V(g)$ is the sum of the $d_\mathcal{V}$ eigenvalues of $U_\mathcal{V}(g)$, all of which reside in the unit circle. Then, for any $h \in G$, 
    \begin{equation}
        U_\mathcal{V}(g^{-1} h g h^{-1}) = \lambda_g^{-1} U_\mathcal{V}(h) (\lambda_g \mathbb{I}) U_\mathcal{V}(h)^{-1} = \mathbb{I},
    \end{equation}
    which implies $g h = h g$ from the faithfulness of $\mathcal{V}$. Since $h \in G$ was arbitrary, $g \in Z(G)$.
\end{proof}

\begin{lemma}\label{lemma:asymptotic_multiplicities}
    In the limit $N \to \infty$ and for finite groups, the multiplicity $\mathfrak{C}_J^N$ of an irrep $J$ in the tensor power $\mathcal{V}^{\otimes N}$ of a faithful representation $\mathcal{V}$ satisfies
    \begin{equation}\label{eq:asymptotic_formula}
        \mathfrak{C}_J^N = 0 \text{ or } f_J d_\mathcal{V}^N (1 + O(e^{-\alpha N})),
    \end{equation}
    where $f_J = d_J |Z(G)| / |G| > 0$, $\alpha > 0$ depends only on $\mathcal{V}$, and $Z(G)$ is the center of $G$.
\end{lemma}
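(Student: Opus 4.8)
The plan is to compute the multiplicity directly from the character inner product and isolate the contribution of the center. Writing $\chi_\mathcal{V}$ for the character of $\mathcal{V}$ and $\chi_J$ for that of the irrep $J$, the multiplicity of $J$ in $\mathcal{V}^{\otimes N}$ is
\begin{equation}
    \mathfrak{C}_J^N = \langle \chi_J, \chi_\mathcal{V}^N\rangle = \frac{1}{|G|}\sum_{g\in G}\overline{\chi_J(g)}\,\chi_\mathcal{V}(g)^N,
\end{equation}
using $\chi_{\mathcal{V}^{\otimes N}}=\chi_\mathcal{V}^N$. The key input is Lemma \ref{lemma:center_faithful}: since $\mathcal{V}$ is faithful, $|\chi_\mathcal{V}(g)|=d_\mathcal{V}$ holds exactly on the center $Z(G)$, and $|\chi_\mathcal{V}(g)|<d_\mathcal{V}$ strictly for $g\notin Z(G)$. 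Because $G$ is finite there is a gap: setting $r=\max_{g\notin Z(G)}|\chi_\mathcal{V}(g)|<d_\mathcal{V}$ and $\alpha=\log(d_\mathcal{V}/r)>0$ (depending only on $\mathcal{V}$), each off-center term is bounded in modulus by $d_J\, r^N = d_J d_\mathcal{V}^N e^{-\alpha N}$, so they contribute $O(d_\mathcal{V}^N e^{-\alpha N})$ in total.

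For the central terms I would use Schur's lemma twice: for $g\in Z(G)$, both $U_\mathcal{V}(g)$ and $U_J(g)$ are scalars, $U_\mathcal{V}(g)=\lambda(g)\mathbb{I}$ and $U_J(g)=\mu(g)\mathbb{I}$, where $\lambda,\mu: Z(G)\to U(1)$ are the (one-dimensional) central characters of $\mathcal{V}$ and $J$. Hence $\chi_\mathcal{V}(g)=d_\mathcal{V}\lambda(g)$ and $\chi_J(g)=d_J\mu(g)$, and the central contribution becomes
\begin{equation}
    \frac{d_J d_\mathcal{V}^N}{|G|}\sum_{g\in Z(G)}\overline{\mu(g)}\,\lambda(g)^N.
\end{equation}
Since $Z(G)$ is abelian, orthogonality of its irreducible characters gives $\sum_{g\in Z(G)}\overline{\mu(g)}\lambda(g)^N=|Z(G)|$ when $\lambda^N=\mu$ and $0$ otherwise. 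In the matching case this yields the leading term $f_J d_\mathcal{V}^N$ with $f_J=d_J|Z(G)|/|G|>0$, and combining with the off-center bound gives $\mathfrak{C}_J^N=f_J d_\mathcal{V}^N(1+O(e^{-\alpha N}))$.

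The remaining point is to show that when $\lambda^N\neq\mu$ the multiplicity is \emph{exactly} zero, rather than merely $O(d_\mathcal{V}^N e^{-\alpha N})$. This follows from an exact central selection rule: the center acts on all of $\mathcal{V}^{\otimes N}$ by the scalar $\lambda(z)^N$ (each tensor factor contributes $\lambda(z)$), so every irrep occurring in the decomposition must have central character $\lambda^N$; thus $\mu\neq\lambda^N$ forces $\mathfrak{C}_J^N=0$, completing the dichotomy. I expect the only genuinely delicate step to be establishing the strict gap $r<d_\mathcal{V}$ with $\alpha$ depending only on $\mathcal{V}$, which rests entirely on the faithfulness hypothesis through Lemma \ref{lemma:center_faithful}; everything else is routine orthogonality bookkeeping.
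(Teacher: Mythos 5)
Your proof is correct and follows essentially the same route as the paper's: Schur orthogonality for the character sum, the faithfulness lemma to isolate the center with an exponential gap $\alpha = \log(d_\mathcal{V}/r)$, orthogonality of central characters on the abelian center to evaluate the leading term, and the central selection rule (the center acts by $\lambda^N$ on all of $\mathcal{V}^{\otimes N}$) to force exact vanishing when the central characters mismatch. The only difference is presentational — you phrase the dichotomy as a forward selection rule while the paper states its contrapositive ($J \in \mathcal{V}^{\otimes N}$ implies $J_Z \in \mathcal{V}_Z^{\otimes N}$) — but the argument is the same.
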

\begin{proof}
    From Schur orthogonality relations, we can express the multiplicity $\mathfrak{C}_J^N$ as
    \begin{align}
        \mathfrak{C}_J^N & = \langle \chi_J, \chi_\mathcal{V}^N \rangle_G, \nonumber \\
        & = \frac{1}{|G|} \sum_{g \in G} \overline{\chi_J(g)} \chi_\mathcal{V}(g)^N.
    \end{align}
    The largest absolute value of $\{ \chi_\mathcal{V}(g) \mid g \in G\}$ is $d_\mathcal{V}$, saturated precisely for $g \in Z(G)$, due to Lemma \ref{lemma:center_faithful}. Hence,
    \begin{equation}
        \mathfrak{C}_J^N = \frac{1}{|G|} \sum_{g \in Z(G)} \overline{\chi_J(g)} \chi_\mathcal{V}(g)^N + O(\lambda_2^N),
    \end{equation}
    where $\lambda_2 < d_\mathcal{V}$ is the second largest absolute value. We can calculate the dominant part above by noting that reducing any representation $V$ on $G$ to $Z(G)$ gives $d_{\mathcal{V}}$ copies of a 1d irrep $V_Z$, formed by the phase factors $\lambda_g$ of $U_V(g \in Z(G)) = \lambda_g \mathbb{I}$. Thus, 
    \begin{align}
        \sum_{g \in Z(G)} \overline{\chi_J(g)} \chi_\mathcal{V}(g)^N & = d_J d_\mathcal{V}^N \sum_{g \in Z(G)} \overline{\chi_{J_Z}(g)} \chi_{\mathcal{V}_Z}^N(g) \\
        & = d_J d_\mathcal{V}^N |Z(G)| \langle \chi_{J_Z}, \chi_{\mathcal{V}_Z}^N \rangle_{Z(G)}.
    \end{align}
    From Schur orthogonality relations on $Z(G)$, the inner product above is either zero or equal to 1. Moreover, it is zero only if $\mathfrak{C}_J^N = 0$, otherwise $J \in \mathcal{V}^{\otimes N}$ would imply $J_Z \in \mathcal{V}_Z^{\otimes N}$. Therefore, we arrive at Eq.~\eqref{eq:asymptotic_formula} with $f_J = d_J |Z(G)| / |G|$ and $\alpha = \log(d_\mathcal{V} / \lambda_2) > 0$.
\end{proof}

Another lemma is necessary for the occurrence of irreps in tensor powers of another irrep:
\begin{lemma}\label{lemma:faithful_everywhere}
    For $\mathcal{V}$ a faithful representation of a finite group $G$ and $J$ an irrep, $J$ appears in the tensor powers $\mathcal{V}^{\otimes N_i}$, $i \geq 0$, for $N_i \to \infty$ an increasing sequence.
\end{lemma}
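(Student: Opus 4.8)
The plan is to reduce the problem to the action of the center $Z(G)$, reusing the character machinery from Lemmas \ref{lemma:center_faithful} and \ref{lemma:asymptotic_multiplicities}. The key point is that, for large $N$, whether $J$ occurs in $\mathcal{V}^{\otimes N}$ is governed entirely by the restriction of $\mathcal{V}$ and $J$ to $Z(G)$, which is controlled by a single $1$-dimensional character whose powers realize \emph{every} central character precisely because $\mathcal{V}$ is faithful.

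First I would restrict both irreps to $Z(G)$. Since $\mathcal{V}$ and $J$ are irreducible, Schur's lemma (cf. Lemma \ref{lemma:center_faithful}) gives $U_\mathcal{V}(g) = \lambda_g \mathbb{I}$ and $U_J(g) = \mu_g \mathbb{I}$ for $g \in Z(G)$, so $\mathcal{V}|_{Z(G)}$ is $d_\mathcal{V}$ copies of the character $\lambda\colon g \mapsto \lambda_g$ and $J|_{Z(G)}$ is $d_J$ copies of $\mu\colon g\mapsto \mu_g$. Faithfulness of $\mathcal{V}$ forces $\lambda$ to be injective on $Z(G)$: if $\lambda_g = 1$ then $U_\mathcal{V}(g) = \mathbb{I}$, hence $g = e$. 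An injective homomorphism $Z(G) \to U(1)$ identifies $Z(G)$ with a finite subgroup of the circle, so $Z(G)$ is cyclic of some order $z$, and the faithful character $\lambda$ generates the cyclic dual group $\widehat{Z(G)}$. In particular the central character of $J$ is a power of $\lambda$: there is $N_0 \in \{0,1,\dots,z-1\}$ with $\mu = \lambda^{N_0}$, i.e. $\mu_g = \lambda_g^{N_0}$ for all $g \in Z(G)$.

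Next I would feed this into the asymptotic computation of Lemma \ref{lemma:asymptotic_multiplicities}. There the leading contribution to $\mathfrak{C}_J^N = \langle \chi_J, \chi_\mathcal{V}^N\rangle_G$ comes from $g \in Z(G)$ and is proportional to $\langle \mu, \lambda^N \rangle_{Z(G)}$, which by Schur orthogonality on $Z(G)$ equals $1$ when $\lambda^N = \mu$ and $0$ otherwise. Since $\lambda^N = \lambda^{N_0}$ exactly when $N \equiv N_0 \pmod{z}$, for every such $N$ the dominant term is nonzero and Lemma \ref{lemma:asymptotic_multiplicities} gives $\mathfrak{C}_J^N = f_J d_\mathcal{V}^N\bigl(1 + O(e^{-\alpha N})\bigr)$, which is strictly positive once $N$ is large enough. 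Hence $J$ occurs in $\mathcal{V}^{\otimes N}$ along the arithmetic progression $N_i = N_0 + i z$ for all sufficiently large $i$, yielding the desired increasing sequence $N_i \to \infty$.

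The main obstacle, and the only place faithfulness is genuinely used, is the second step: guaranteeing that the central character of the \emph{arbitrary} target irrep $J$ lies in the cyclic group generated by $\lambda$. Without faithfulness, $\lambda$ need not generate $\widehat{Z(G)}$ (the representation would factor through a quotient and miss certain irreps), and the progression could be empty. A more classical alternative avoids the asymptotic lemma altogether: invoke Burnside's theorem that a faithful representation contains every irrep in some $\mathcal{V}^{\otimes n_0}$ (provable by a Vandermonde argument on the distinct values of $\chi_\mathcal{V}$, using $\chi_\mathcal{V}(g) = d_\mathcal{V} \iff g = e$), together with the fact that the trivial irrep occurs in some $\mathcal{V}^{\otimes m}$ with $m \geq 1$ (since $\triv \subseteq \mathcal{V}\otimes\mathcal{V}^*$ and $\mathcal{V}^* \subseteq \mathcal{V}^{\otimes k}$); then $J \subseteq \mathcal{V}^{\otimes(n_0 + jm)}$ for all $j \geq 0$ by tensoring with the trivial factor. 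I would present the center-based argument as primary, since it dovetails with the lemmas already established in this appendix.
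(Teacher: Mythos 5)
Your proof is correct, but your primary argument takes a genuinely different route from the paper's. The paper's proof is exactly the ``classical alternative'' you relegate to your closing remark: the Burnside--Brauer theorem gives first occurrences $N_J$ and $N_{\triv}$ of $J$ and of the trivial irrep in tensor powers of $\mathcal{V}$, and $\triv \otimes J \cong J$ then yields $J \subseteq \mathcal{V}^{\otimes (N_J + i N_{\triv})}$ for all $i \geq 0$. Your center-based argument is also sound: faithfulness makes the central character $\lambda$ of $\mathcal{V}$ injective on $Z(G)$, so $Z(G)$ is cyclic of some order $z$ and $\lambda$ generates $\widehat{Z(G)}$; hence the central character $\mu$ of $J$ equals $\lambda^{N_0}$ for some $N_0$, and the dominant central contribution to $\mathfrak{C}_J^N$ identified in the proof of Lemma \ref{lemma:asymptotic_multiplicities} is $f_J d_\mathcal{V}^N \langle \mu, \lambda^N \rangle_{Z(G)}$, which is nonzero precisely when $N \equiv N_0 \pmod{z}$ and is too large to be cancelled by the $O(\lambda_2^N)$ remainder once $N$ is sufficiently large. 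Two comments. First, your argument genuinely needs the interior of Lemma \ref{lemma:asymptotic_multiplicities}'s proof rather than its statement: the stated dichotomy ($\mathfrak{C}_J^N = 0$ or $f_J d_\mathcal{V}^N(1+O(e^{-\alpha N}))$) does not by itself tell you which branch holds for a given $N$, and you correctly return to the character sum over $Z(G)$ to decide this. Second, your route buys a sharper conclusion: since every irrep occurring in $\mathcal{V}^{\otimes N}$ must have central character $\lambda^N$, you obtain that for all sufficiently large $N$, $\mathfrak{C}_J^N > 0$ if and only if $N \equiv N_0 \pmod{z}$, so the occurrence set is eventually a \emph{full} arithmetic progression of modulus $z = |Z(G)|$, whereas the paper's progression has step $N_{\triv}$ (necessarily a multiple of $z$) and is not claimed to be exhaustive. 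In exchange, the paper's argument is more elementary, needs no asymptotics, and starts at the first occurrence $N_J$ rather than at an unspecified ``sufficiently large'' $N$.
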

\begin{proof}
    Due to Burnside-Brauer theorem \cite{burnsideTheoryGroupsFinite1955, brauerNoteTheoremsBurnside1964, fulton_representation_2004}, any irrep is contained in some tensor power of a faithful representation $\mathcal{V}$. Let $N_\triv$ and $N_J$ be the first time the trivial and the $J$ irrep appear in $\mathcal{V}^{\otimes N}$, respectively. Hence, since $\triv \otimes J \cong J$, then $J$ appears in all $\mathcal{V}^{\otimes N_i}$ powers, with $N_i = N_J + i N_\triv$, $i \geq 0$.
\end{proof}

Given the lemmas above, the proof for finite groups goes as follows. From Eq.~\eqref{eq:reduced_MMIS} for the reduced density matrix, we have
\begin{equation}\label{eq:k-site_reduced}
    \rho^{N,k}_\triv = \sum_{J\in \mathcal{J}_{k,N-k}} \sum_{a = 1}^{\mathfrak{C}^{k}_{J}}\frac{\mathfrak{C}^{N-k}_{J^*}}{\mathfrak{C}_{\triv}^{N}d_{J}}\mathbb{I}_{J}\otimes \ket{a}\bra{a},
\end{equation}
where $\mathcal{J}_{k,N-k}$ is the set of irreps $J$ that appear in $\mathcal{V}^{\otimes k}$ and whose conjugate $J^*$ appear in $\mathcal{V}^{\otimes(N-k)}$. In turn, the maximally mixed state can be decomposed in the same basis as
\begin{equation}\label{eq:k-site_MM}
    \frac{1}{d_{\mathcal{V}}^k} \mathbb{I}_{\mathcal{V}^{\otimes k}} = \sum_{J\in \mathcal{J}_k} \sum_{a = 1}^{\mathfrak{C}^{k}_{J}}\frac{1}{d_{\mathcal{V}}^k}\mathbb{I}_{J}\otimes \ket{a}\bra{a},
\end{equation}
where $\mathcal{J}_{k}$ is the set of irreps $J$ that appear in $\mathcal{V}^{\otimes k}$. For high enough $N$ and assuming a faithful representation $\mathcal{V}$ of $G$, $\mathcal{J}_{k,N-k} = \mathcal{J}_k$ for at least an infinite increasing sequence $N_i \to \infty$ of system sizes, from Lemma \ref{lemma:faithful_everywhere} applied to $\mathcal{V}^{\otimes (N-k)}$. Given an $N$ in this sequence, we can use Eq.~\eqref{eq:asymptotic_formula} of Lemma \ref{lemma:asymptotic_multiplicities} to simplify the components of $\rho_\triv^{N,k}$ to 
\begin{equation}
    \frac{\mathfrak{C}^{N-k}_{J^*}}{\mathfrak{C}_{\triv}^{N}d_{J}} = \frac{1}{d_\mathcal{V}^k} [1 + O(e^{-\alpha (N-k)})],
\end{equation}
where we have used that $f_{J^*} = f_\triv d_J$. Thus, the trace distance between $\rho^{N,k}_\triv$ and $\frac{1}{d_{\mathcal{V}}^k} \mathbb{I}_{\mathcal{V}^{\otimes k}}$ is exponentially small in $N-k$: 
\begin{align}
    D(\rho^{N,k}_\triv, d_{\mathcal{V}}^{-k} \mathbb{I}_{\mathcal{V}^{\otimes k}}) & = \frac{1}{2} \sum_{J\in \mathcal{J}_k} \sum_{a = 1}^{\mathfrak{C}^{k}_{J}} \left| \frac{\mathfrak{C}^{N-k}_{J^*}}{\mathfrak{C}_{\triv}^{N}d_{J}} - \frac{1}{d_{\mathcal{V}}^k}\right| \nonumber \\
    & = \frac{1}{2} \sum_{J\in \mathcal{J}_k} \sum_{a = 1}^{\mathfrak{C}^{k}_{J}} \frac{1}{d_{\mathcal{V}}^k} O(e^{-\alpha(N-k)}) \nonumber \\
    & = O(e^{-\alpha(N-k)}).
\end{align}

By using the fact that $D(\rho, \sigma) = \max_{0 \leq P \leq \mathbb{I}} \Tr[P(\rho - \sigma)]$, we get as a corollary that the expectation values of the MMIS and the identity state on $k$ qubits are also exponentially close (See Eq. \eqref{eq:exp_vals_corollary}).
 
\subsection{Semisimple Lie groups}
The multiplicities of semisimple Lie groups also satisfy the Schur orthogonality relation,
\begin{align}\label{eq:schur_lie}
        \mathfrak{C}_J^N & = \langle \chi_J, \chi_\mathcal{V}^N \rangle_G, \nonumber \\
        & = \int_{G} dg \overline{\chi_J(g)} \chi_\mathcal{V}(g)^N,
    \end{align}
where $dg$ is the Haar invariant measure of $G$.
    
Let us assume the case of trivial center $Z(G)$ for simplicity\footnote{For groups with nontrivial discrete center, there will only be a change in the proportionality constant of $\mathfrak{C}_J^N$.}. We can restrict the integral above to a neighbourhood around the identity element, since there the characters peak at their global maximum $\chi_j(e) = d_j$. Assume a quadratic expansion around the identity, $\chi_{J}(e + \epsilon t) \approx d_J(1-a_J \epsilon^{2})$, $\chi_{\mathcal{V}}(e +\epsilon t) \approx d_{\mathcal{V}}(1-a_{\mathcal{V}} \epsilon^{2})$, for $t \in \mathfrak{g}$ and small $\epsilon$.\footnote{Dependencies of $a_j$ with $t$ will not affect the end results, so we assume isotropy for simplicity.} In the thermodynamic limit, $\chi_{\mathcal{V}}(e +\epsilon t)^{N}$ can be approximated by a Gaussian around $\epsilon = 0$, of width $1/\sqrt{a_\mathcal{V} N}$. Thus, a good estimate of Eq.~\eqref{eq:schur_lie} is obtained from integrating around a $m = \dim \mathfrak{g}$ dimensional ball of radius $1/\sqrt{a_\mathcal{V} N}$ around $e$, 
\begin{equation}
    \mathfrak{C}_J^N \approx d_J d_{\mathcal{V}}^{N} \int_{S^{m-1}} d\Omega \int_{0}^{\frac{1}{\sqrt{a_\mathcal{V} N}}} d \epsilon \left[e^{-N \epsilon^{2}/a_\mathcal{V}}(1- a_J \epsilon ^{2})\right], 
\end{equation}
This formula can be evaluated to give
\begin{equation}\label{eq:lie_mult}
    \mathfrak{C}_J^N \approx c \frac{d_J d_{\mathcal{V}}^{N}}{N^{\dim \mathfrak{g}/2}}(1 - O(N^{-1})),
\end{equation}
for some constant $c$. This leads to the following result,
\begin{equation}
    \frac{\mathfrak{C}^{N-k}_{J^*}}{\mathfrak{C}_{\triv}^{N}d_{J}} = \frac{1}{d_\mathcal{V}^k} [1 + O(k/N)],
\end{equation}
This shows, by repeating the same arguments as the discrete non-Abelian case, that the trace distance between $\rho^{N,k}_\triv$ and identity state is algebraically small for semisimple Lie groups. 

Note, Eq.~\eqref{eq:lie_mult} is true in general for all $J$, but not very useful for finding the peak of the distribution of the multiplicities, which is needed for the asymptotic computation of half chain entanglement of formation (Sec.~\ref{sec:lie_mmis_gen}).

\section{$S_3$ maximally mixed invariant state}\label{appsec:s3_details}

\subsection{Correlators in the $S_3$ MMIS}\label{appsec:s3_correlations}

We first find an expression for the $S_3$ MMIS in terms of the Pauli operators. This helps us calculate the correlations explicitly. First of all, we note that for any unitary representation of the $S_3$ group, the followings are projection operators onto different sectors of the $\mathbb{Z}_3$ rotation subgroup:
\begin{eqnarray}
  P_0 = \frac{\mathbb{I} + R + R^{-1}}{3}, \quad  P_{\theta} = \frac{\mathbb{I} + e^{- i \theta} R + e^{ i \theta} R^{-1}}{3},  
\end{eqnarray}
where $P_0$ is the projector onto the singlet sector of the rotation (i.e., $R P_0 = P_0$), and $P_{\theta}$ is the projection onto $\{\ket{\theta}\}$ subspace of the $\twod$ representation (i.e. $R P_{\theta} = e^{i \theta} P_{\theta}$, and $P_x P_{\theta} P_x = P_{-\theta}$). Therefore, the projector onto the whole $\twod$ representation is
\begin{eqnarray}
   P_{\twod} = P_{2\pi/3} + P_{-2\pi/3} = \mathbb{I} - P_0. 
\end{eqnarray}
In order to find the projectors onto the trivial and sign sectors, we need to project $P_0$ further onto even and odd sectors of the reflection:
\begin{align}
    P_{\triv} &= \frac{\mathbb{I} + R + R^{-1}}{3} \frac{\mathbb{I} + P_x}{2}, \nonumber \\
    P_{\sgn} &= \frac{\mathbb{I} + R + R^{-1}}{3} \frac{\mathbb{I} - P_x}{2}
\end{align}
It is not difficult to confirm that both $P_{\triv}$ and $P_{\sgn}$ are invariant under rotation but are even and odd under reflection respectively (i.e., $R P_{\triv} = P_{\triv}$, $P_x P_{\triv} = P_{\triv}$ and $R P_{\sgn} = P_{\sgn}, P_x P_{\sgn} = -P_{\sgn}$). Now, for an $N$-particle system, we can take $R = \prod e^{\frac{2 \pi i}{3} Z_j}$ and $P_x = \prod X_j$. The MMIS is the $P_{\triv}$ projector while its trace is normalized to 1:
\begin{eqnarray}
    \rho_{\triv} = \frac{1}{\mathfrak{C}_{\triv}^N} \frac{\mathbb{I} + \prod e^{\frac{2\pi i}{3} Z_j} + \prod e^{-\frac{2\pi i}{3} Z_j}}{3} \frac{\mathbb{I} + \prod X_j}{2}
\end{eqnarray}

In order to capture strong/weak symmetry breaking, we should seek operators with charged one-point functions and invariant two-point functions. With this choice, the one-point functions always vanish due to the strong symmetry of the MMS in the singlet sector of the group. For the case of $S_3$, $\expval{Z_i Z_j}$ and $\expval{X_i X_j + Y_i Y_j}$ are the simplest correlations satisfying the above conditions. Using the expansion $\exp{\pm \frac{ 2\pi i}{3} Z_j} = -\frac{1}{2} \mathbb{I} \pm i \frac{\sqrt{3}}{2} Z_j$ and the fact that any Pauli operator is traceless, one can calculate the following linear and quadratic correlators (in the large $N$ limit):
\begin{align}\label{eq:lin_corr_S3}
    \Tr[\rho_{\triv} Z_i Z_j] &= \frac{6 (-1)^{N+1}}{2^N} \nonumber \\
    \Tr[ \rho_{\triv} (X_i X_j + Y_i Y_j)] &= 0
\end{align} 

\begin{align}
    &\frac{\Tr\big[(\rho_{\triv} Z_i Z_j)^2\big]}{\Tr \rho_{\triv}^2} = 1, \nonumber \\
    &\frac{\Tr\big[(\rho_{\triv} (X_i X_j + Y_i Y_j))^2\big]}{\Tr \rho_{\triv}^2} = 2 - 12 \frac{(-1)^{N+1}}{2^N}
\end{align}
Eq.~\eqref{eq:lin_corr_S3} shows that the linear correlations are exponentially (super-exponentially for $XX+YY$) short-ranged. 

Using the results from Appendix \ref{appsec:exp_values_proof},
we can prove something even stronger, that the mutual information $I(i,j) \equiv S(\rho_\triv^i)+S(\rho_\triv^j)-S(\rho_\triv^{ij})$ is also $O(e^{-\alpha N})$ for any single qubit sites $i,j$ which implies that all connected correlations are exponentially suppressed~\cite{Wolf_2008} in system size $N$. 


This behavior as well as the $O(1)$ scaling of the entanglement suggests that the mixed state $\rho_{\triv}$ is likely to be SRE, although they cannot rule out the long-range entanglement possibility.
In the following sections we discuss a method letting us determine numerically whether the state is SRE or LRE. The only caveat will be then how large the finite size effects might be.

\subsection{A local $S_3$-symmetric channel}\label{appsec:s3_channel}

Here we give a concrete construction of a strongly $S_3$-symmetric local measurement channel.  First, we note that there are no on-site operators invariant under $S_3$. Therefore we need to involve higher-body measurements. The simplest channel which involves 2-body terms is to measure the irrep of nearest neighbor pairs in a brick wall structure. The corresponding  local gate acting on sites $i,j$ is thus given by
\begin{align}
    \mathcal{E}^{ij}_{\text{2-body}}(\rho) = K_{\triv}^{ij} \rho K_{\triv}^{ij} + K_{\sgn}^{ij} \rho K_{\sgn}^{ij} + K_{\twod}^{ij} \rho K_{\twod}^{ij},
\end{align}
where the local Kraus operators are
\begin{align}
    & K_{\triv}^{ij} = \frac{\mathbb{I}-Z_i Z_j}{2} \frac{\mathbb{I}+X_i X_j}{2} \nonumber \\ & K_{\sgn}^{ij} = \frac{\mathbb{I}-Z_i Z_j}{2} \frac{\mathbb{I}-X_i X_j}{2} \nonumber \\ & K_{\twod}^{ij} = \frac{\mathbb{I}+Z_i Z_j}{2} 
\end{align}
This channel looks very simple and generates a Clifford dynamics \footnote{The dynamics can be interpreted in this way: at each step, choose a bond at random and measure $Z_iZ_j$. If the outcome is $+1$, do nothing more; otherwise, measure $X_i X_j$ of the same bond.} letting us efficiently simulate the channel in a polynomial time. However, the channel has an additional $\mathbb{Z}_2$ symmetry generated by $P_z = \prod_{\text{all sites}} Z_i$. The generators of local symmetry group can be chosen as $\{e^{2\pi i/3 Z},X,Z\}$ which generate a $S_3 \rtimes \mathbb{Z}_2$ group. The additional $\mathbb{Z}_2$ subgroup can lead to a steady state degeneracy in charge sectors of the $S_3$ subgroup. To break the additional $\mathbb{Z}_2$ symmetry, one should involve 3-body measurements. The simplest realization of this is to measure the irrep of three neighboring sites:
\begin{align}
    \mathcal{E}^{ijk}_{\text{3-body}}(\rho) =  L_{\triv}^{ijk} \rho L_{\triv}^{ijk} +& L_{\sgn}^{ijk} \rho L_{\sgn}^{ijk} + L_{\twod}^{ijk} \rho L_{\twod}^{ijk},
\end{align} 
where $L^{ijk}$ operators are projectors onto different $S_3$ sectors of the sites $i,j,k$. This channel has also undesired symmetries. Therefore, in order to construct a purely $S_3$ symmetric channel, one can combine both types of gates in a brick wall structure depicted in Fig.~\ref{fig:S3_neg_vs_time}. The channel is both complete and unital, hence by the results of Sec.~\ref{sec:channel_thm}, the MMIS is the unique steady state in the invariant sector.

\subsection{$S_3$ MMIS: SRE or LRE?}

\begin{figure}[t]
\centering
  \includegraphics[height=6.cm]{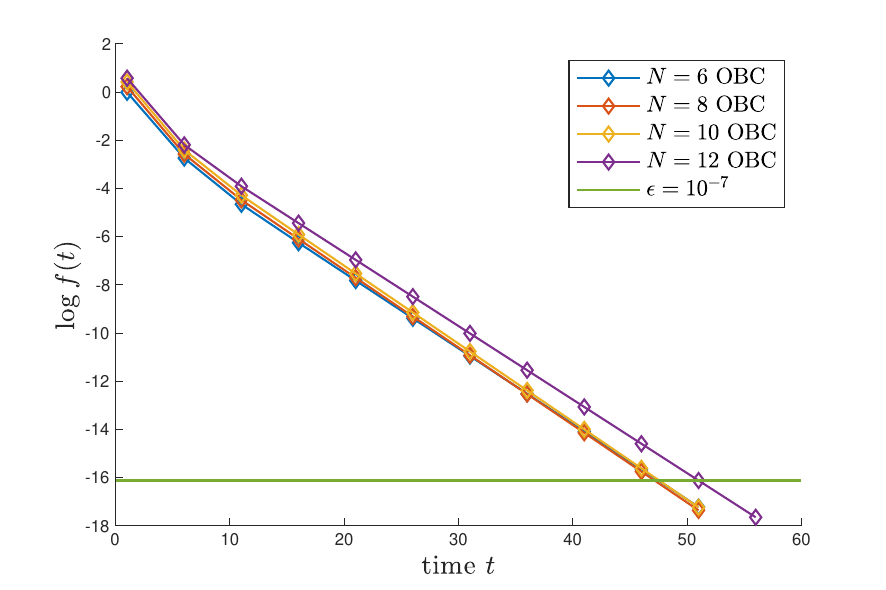}
  \caption{$\log f(t)$ vs time, for the random unitary channel with the parameters $q = 0.5$, $\phi_1 = \pi$ and $\phi_2 = \pi/2$. Here $f(t)$ shows how fast the density matrix is converging to the steady state and is defined as $f(t) := ||\rho_t - \rho_{t-1}||_1/\sqrt{||\rho_t||_1 \hspace{3pt} ||\rho_{t-1}}||_1$. where $\rho_t$ is the density matrix after depth $t$. The system is initialized with nearest neighbor $\triv$ pairs.}
  \label{fig:S3_conv_time_unit}
\end{figure}

One can numerically confirm that the $S_3$ MMIS can be generated at finite depth from a product state through the application of the local channel introduced in the last section. This, however, does not necessarily imply that the state is SRE in the sense that it can be decomposed into an ensemble of SRE pure states. One can now ask the question that under which conditions the finite convergence time implies the SREness of the state. To answer this, we note that the trajectories of a channel form a natural pure state decomposition for the steady state. Therefore, one sufficient condition is that the pure state trajectories of the channel undergo a finite depth unitary circuit. Fortunately, generic strongly symmetric local channels are complete and lead to the same unique state steady as proven in Sec.~\ref{sec:channel_thm}. Therefore, instead of the channel used earlier which involve only measurement gates, one can replace the local measurement gates with a classical combination of identity gates and symmetric unitaries. Here is a systematic construction of such channel: assume the projectors corresponding to local measurement outcomes form a set $\{P_i\}$. We define a local unitary operator by $U = \sum_{i} e^{i\phi_i} P_i$, where $\phi_i$'s are some parameters. One should then replace any measurement with the local channel $\mathcal{E}_u(\rho) = (1-q) \rho + q U \rho U^{\dagger} $. Each trajectory of this channel realize a unitary evolution as desired. For the specific case of $S_3$ symmetry we choose, $q = 0.5$ and $U = K_{\triv} + e^{i \phi_1} K_{\sgn} + e^{i \phi_2} K_{2d}$ (for both 2-body and 3-body operations). Fig.~\ref{fig:S3_conv_time_unit} shows that for an $S_3$-symmetric channel constructed in this way, the convergence time does not scale with the system size, suggesting that the $S_3$ MMIS must be SRE.

\section{$SU(d)$ maximally mixed invariant state}\label{appsec:su(d)}

Consider a chain of qudits, with the local qudit space being a $d$ dimensional complex Hilbert space that naturally carries the fundamental irrep of $SU(d)$ symmetry. We can use Eq.~\eqref{eq:mmis_entanglement} to characterise the bipartie entanglement of the maximally mixed $SU(d)$ invariant state on this chain for any bipartition. For $SU(d)$, this computation can be carried out exactly using inputs from Schur-Weyl duality, which allows one to enumerate the multiplicities of the $SU(d)$ irreps in the tensor product space.

\subsection{Schur-Weyl Duality and $SU(d)$ symmetry}

Let $V_d$ be a d-dimensional qudit Hilbert
space that carries the fundamental irrep of $SU(d)$ symmetry. The tensor product space  of $N$ qudits $V_{d}^{\otimes N}$ can be block decomposed into irreps of the global $SU(d)$. Schur-Weyl duality~\cite{goodman_symmetry_2009} states that the $SU(d)$ group and the permutation group on $N$ objects, $S_N$, jointy decompose $V_{d}^{\otimes N}$ into irreps of the respective groups as,
\begin{align}\label{eq:sw_duality}
    V_{d}^{\otimes N} = \bigoplus_{\lambda \in \text{Par}(N, d)} V_{\lambda} \otimes S_{\lambda},
\end{align}
where $V_{\lambda}$ are the irreps of $SU(d)$ on $V_{d}^{\otimes N}$, and $S_{\lambda}$ describe the irreps of $S_N$. $\text{Par}(N,d)$ refers to the set of partitions of $N$ into $d$ parts, where each such partition is described by sequence of $d$ positive integers $(\lambda_1, \lambda_2,\cdots,\lambda_{d})$, with $\lambda_{i+1}>\lambda_{i}$, and $\sum_{i}\lambda_i = N$. The partitions can be described conveniently by a Young diagram (here the rows are numbered $0,\dots, d-1$ from the bottom),
\newline
\begin{center}
\begin{tabular}{r@{}l}
\raisebox{-6ex}{$d\left\{\vphantom{\begin{array}{c}~\\[12ex] ~
\end{array}}\right.$} &
\begin{ytableau}
~       &        &       &       &      &   \none[\dots]        \\
~       &        &       &        &   \none[\dots]  & \none  \\
\none[\vdots]  & \none & \none & \none & \none         & \none \\
~      &         &       & \none & \none         & \none \\
~      &         & \none & \none & \none  &  \none      
\end{ytableau}\\[-1.5ex]
&$\underbrace{\hspace{3em}}_{\displaystyle a}\underbrace{\hspace{0.5em}}_{\displaystyle p_{1}}\underbrace{\hspace{3em}}_{\displaystyle p_{2}} \dots    \underbrace{\hspace{1em}}_{\displaystyle p_{d-1}}$
\end{tabular}    
\end{center}
where $N$ symbols are divided into $d$ rows, specified by $d-1$ non-negative integers $(p_{1},p_{2}, \cdots, p_{d-1})$. The length of the $0$th row, $a$, is determined by setting $a = N-\sum_{i = 1}^{d-1}(d-i)p_{i}$. The subsequent rows, for $i = 1,\dots, d-1$ have lengths $\lambda_i = a+\sum_{s = 1}^{i}p_{s}$.

Each Young diagram $\lambda$ (or equivalently, partition $\in \text{Par}(N,d)$) describes an irrep of $S_{N}$. The states within the irrep can be labeled by the inequivalent Young tableaux for each Young diagram. By the Schur-Weyl duality and the representation theory of $SU(d)$, the Young diagrams also describe irreps of $SU(d)$. Eq.~\eqref{eq:sw_duality} provides a convenient description of the multiplicities of $V_{\lambda}$ in $V_{d}^{\otimes N}$: it is given by the dimension of the $S_{\lambda}$ irrep of the permutation group, $\mathfrak{C}_{\lambda} = \text{dim}(S_{\lambda})$. The dimension of the $SU(d)$ irrep is denoted by $d_{\lambda} = \text{dim}(V_{\lambda})$.

The dimension of the $S_{N}$ irreps are given by the `hook length formula'~\cite{fulton_representation_2004}, 
\begin{align}
    \mathfrak{C}^{N}_{\lambda} = \text{dim} S_{\lambda} \equiv \frac{N!}{\prod_{(x,y)\in \lambda}h_{x,y}}
\end{align}
where $(x, y)$ specifies a box from the Young diagram $\lambda$ by its row and column numbers, and $h(x, y)$ counts the number of all boxes to the right of and below $(x, y)$, including itself. The dimension of the $SU(d)$ irreps are given by the `factor over hooks' formula~\cite{georgi_lie_2019}. The factors are numbers assigned to each box of the Young diagram $f(x,y)$ according to the following rule: put $d$ 
in the upper left hand corner of the tableau, and fill the rest of the tableau by adding 1 for each move to the right, and subtracting 1 for each move down. The dimension of $V_{\lambda}$ is given by,
\begin{align}
    d_{\lambda} = \text{dim} V_{\lambda} \equiv \frac{\prod_{(x,y)\in \lambda}f_{x,y}}{\prod_{(x,y)\in \lambda}h_{x,y}}.
\end{align}

For a Young diagram specified by $(p_{1},p_{2},\dots,p_{d-1})$, we have the following formula for $d_{\lambda}$,
\begin{align}\label{eq:dim_irrep}
    d_{\lambda} &=
    \left(p_{1}+1\right) \left(\frac{(p_{1}+p_2+2)(p_2+1)}{2!}\right)\times\nonumber \\
    &\left(\frac{(p_{1}+p_2+p_3+3)(p_2+p_3+2)(p_3+1)}{3!}\right) \dots \nonumber \\
    &= \prod_{s = 1}^{d-1}\frac{\prod_{x = 1}^{s}\left(\sum_{y = 1}^{s-x+1}p_{x+y-1}+s-x+1\right)}{s!}.
\end{align}
The formula for the multiplicity $\mathfrak{C}^{N}_{\lambda}$ is given in terms of a multinomial coefficient of the form ${N \choose {r_1~~r_2~~\dots~~r_d}}$ and $a$ (which is defined as $a = N-\sum_{i = 1}^{d-1}(d-i)p_{i}$),
\begin{align}\label{eq:multiplicity_su_d}
    &\mathfrak{C}^{N}_{\lambda} = \nonumber \\& {N \choose {a~~~(a+p_1)~~~(a+p_1+p_2)~~~\dots~~~ (a+\sum_{i = 1}^{d-1}p_i)}}\times \nonumber \\
    &\biggl[\left(\frac{p_1+1}{a+p_1+1}\right)\left(\frac{(p_1+p_2+2)(p_2+1)}{(a+p_1+p_2+2)(a+p_1+p_2+1)}\right)\dots \biggr]\nonumber\\
    &= \frac{N!}{a!\prod_{s = 1}^{d-1}(a+\sum_{x = 1}^{s}p_x)!}\times \nonumber \\
    &\prod_{t = 1}^{d-1}\frac{\prod_{x = 1}^{t}\left(\sum_{y = 1}^{t-x+1}p_{x+y-1}+t-x+1\right)}{\prod_{x = 1}^{t}\left(a+\sum_{y = 1}^{t}p_{t}+t-x+1\right)}
\end{align}

\subsection{Half-chain entanglement of $SU(d)$ maximally mixed invariant state}
For concreteness, we will consider the maximally mixed $SU(d)$ invariant state on $2N$ qudits, and we will characterise the bipartite entanglement for an equal bipartition, $|A| = |B| = N$. The Schur-Weyl duality allows us to use Eq.~\eqref{eq:mmis_entanglement} to compute the entanglement properties of $SU(d)$ invariant maximally mixed state, as the multiplicities can be explicitly computed using combinatorial techniques, as discussed in the preceding section.

The probability measure $p_{J} = \mathfrak{C}_{[J]}^{N} \mathfrak{C}_{[J^{*}]}^{N} / \mathfrak{C}_{\triv}^{2N}$ is dominated by its maxima in the large $N$ limit, at the saddle point $J_{\text{max}}$ that maximizes the multiplicity $\mathfrak{C}_{[J]}^{N}\mathfrak{C}_{[J^{*}]}^{N}$. Let the Young diagram corresponding to $J$ be $\lambda = (p_{1},p_2, \dots p_{d-1})$. The Young diagram corresponding to the conjugate representation, $J^{*}$ is given by `flipping' $\lambda$, as $\lambda^{*} = (p_{d-1},p_{d-2}, \dots p_{1})$. The dimensions of the $SU(d)$ irreps corresponding to $\lambda$ and $\lambda^{*}$ are equal, as can be confirmed explicitly using Eq.~\eqref{eq:dim_irrep}.

The multiplicities $\mathfrak{C}^{N}_{\lambda}$ and $\mathfrak{C}^{N}_{\lambda^*}$ are given by the expressions in Eq.~\eqref{eq:multiplicity_su_d}. To find the saddle point corresponding to the maximal multiplicity factor, we consider $p_i$ such that $1 \ll p_{i} \ll N$. In this limit, we can rewrite Eq.~\eqref{eq:multiplicity_su_d} as,
\begin{align}\label{eq:multiplicity_asymptotics}
    \mathfrak{C}^{N}_{\lambda = (p_{1}, \dots, p_{d-1})} &\approx A \exp\biggl[\frac{N}{d}\biggl(G(p_{1},\dots,p_{d-1})+\nonumber \\
    &F(p_{1},\dots,p_{d-1})\biggr)\biggr]\left(1+O\left(p^2/N^2\right)\right),
\end{align}
where, $G(p_{1},\dots,p_{d-1})$ is obtained from the mutinomial coefficient factor in Eq.~\eqref{eq:multiplicity_su_d} after using Stirling's approximation,

\begin{align}
    G(p_{1},\dots,p_{d-1}) &= -\sum_{s = 1}^{d-1}H_{s}(p_{1}, \dots, p_{d-1})\times \nonumber \\
    &\log N H_{s}(p_{1}, \dots, p_{d-1}) \text{ where,}\nonumber \\
    H_{s}(p_1, \dots, p_{d-1}) &= 1 - \frac{1}{N}\sum_{t = 1}^{d-1}(d-t)p_{t} + \frac{d}{N}\sum_{t = 1}^{s} p_{t},
\end{align}
and $A$ is a normalization factor.

The rest of the factors in Eq.~\eqref{eq:multiplicity_su_d} give rise to the $F(p_{1},\dots,p_{d-1})$ term after polynomial expansions in the limit $1 \ll p_{i} \ll N$,
\begin{align}
    F(p_1,\dots,p_{d-1}) &= \frac{d}{N}\sum_{s = 1}^{d-1}\sum_{t = 1}^{s}\log \left(\sum_{x = 1}^{s-t+1}p_{s-x+1} \right)
\end{align}

The saddle point equations are $\partial_{i}(G + F) = 0$, where $\partial_{i}G = \frac{\partial G}{\partial p_{i}}$. By an explicit computation, we find,
\begin{align}\label{eq:saddle_eom}
    \partial_{i}(G + F) = &\sum_{s = 1}^{d-1}\sum_{t = 1}^{s}\biggl[\frac{p_{t}}{N^2}\left(d-i-d\Theta_{s,i}\right)+\nonumber \\
    &\frac{1}{N}\frac{\Theta_{s,i}\Theta_{i,x}}{\sum_{x = 1}^{s-t+1}p_{x}} + O\left(p^2/N^2\right)\biggr],\nonumber \\
    \text{where, } \Theta_{i,j} &=\begin{cases}
        1 \text{ if } i\geq j \\
        0 \text{ otherwise}
    \end{cases}
\end{align}
We now use a scaling argument to estimate the saddle point. Suppose we scale $p_{i} \sim \sqrt{N}$; then the saddle point equation of motion Eq.~\eqref{eq:saddle_eom} can be seen to be asymptotically satisfed to $O(1/N)$. This scaling argument also applies to the saddle point of the product $\mathfrak{C}^{N}_{\lambda}\mathfrak{C}^{N}_{\lambda^*}$, which is also maximized for $p_{i}\sim \sqrt{N}$. This allows us to estimate the bipartite entanglement entropy for the maximally mixed $SU(d)$ invariant state to be (using Eq.~\eqref{eq:dim_irrep}), 
\begin{align} \label{eq:ent_sud}
    E_{N:N}\left(\rho_{\triv}^{2N}\right) &\approx \log d_{\lambda_{\text{max}}}|_{p_{i}\sim \sqrt{N}} \approx \log N^{\frac{d(d-1)}{4}} \nonumber \\
    &\approx \frac{d(d-1)}{4}\log N.
\end{align}

\section{MPDO representation of maximally mixed states in fixed charge sectors}
\label{appsec:MPO}

\begin{figure}[t]
    \centering
    \includegraphics[width=0.6\linewidth]{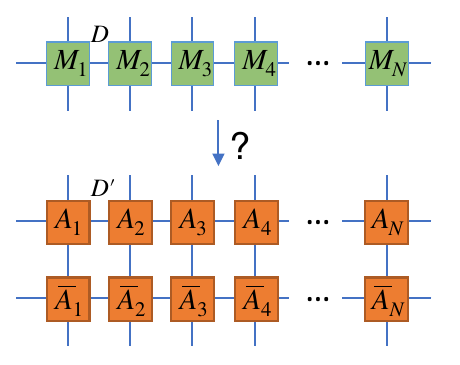}
    \caption{(Top) The MPDO form of a density matrix where $M$ tensors are not necessarily positive semi-definite. 
    (Bottom) The local purification of the state with an MPS bond dimension (purification rank) $D'$. The vertical contracted lines correspond to traced-out ancilla states.}  \label{fig:MPDO} 
\end{figure}

By definition, an MPDO is a matrix product operator representing a density matrix. Any MPDO $\rho$ can be generally viewed as an MPS by reshaping it to the vector $|\rho)$ by the transformation $\rho = \sum \rho_{ij} \ket{i} \bra{j} \to |\rho) = \sum \rho_{ij} \ket{i} \ket{j}$. Hence, the bond dimension of an MPDO, $D$, is determined by the operator entanglement via $\log (D) = O_{A:A^c}(\rho)$.

The operator entanglement of maximally mixed invariant states is given by Eq.~\eqref{eq:op_ent}. For the equal bi-partition of the system into $N/2:N/2$ qudits, the operator entanglement $O_{\frac{N}{2}:\frac{N}{2}}(\rho_{\triv}^N)$ remains constant for finite non-Abelian groups and grows polynomially with the system size for non-Abelian Lie groups in the $N \to \infty$ limit. To show why this is the case, we need to find how the quantum and classical contributions in Eq.~\eqref{eq:op_ent} scale with the system size. We have already discussed the scaling of the quantum entanglement for different groups. Here we explain why the classical contribution behaves similarly.

The classical term in Eq.~\eqref{eq:op_ent} is the Shannon entropy $H[p_J]$ of the probability distribution $p_J = \frac{\mathfrak{C}^{N_A}_{J}\mathfrak{C}^{N_{A^c}}_{J^*}}{\mathfrak{C}_{\triv}^{N}}$. For finite groups the set $\{p_J\}$ is defined on a finite domain of irreps $J$ not scaling with the system size, leading to a constant Shannon entropy in the thermodynamic limit. On the other hand, for semi-simple Lie groups, the probability $p_J$ is localized around $j_{max}$, and we expect the magnitude scaling as $p_{j_{max}} \sim 1/poly(N)$. For example, for $SU(d)$, one can use expression given in Eq.~\eqref{eq:multiplicity_su_d} to show $\mathfrak{C}_{\lambda_{max}} \sim N^{\frac{2-d-d^2}{4}}$ and $\mathfrak{C}_{\triv} \sim N^{\frac{1-d^2}{2}}$, which gives $p_{\lambda_{max}} \sim N^{\frac{1-d}{2}}$ resulting in $H[P_\lambda] \approx -\log(p_{\lambda_{max}}) \sim \frac{(d-1)}{2} \log(N)$. Combining this with Eq.~\eqref{eq:ent_sud}, for the $SU(d)$ MMIS, we find 
\begin{eqnarray}
   \log D \approx \frac{d^2-1}{2} \log(N)  
\end{eqnarray}
in the thermodynamic limit. This result implies that MMISs can be represented as MPOs with bond dimensions no larger than polynomials, enabling us to numerically simulate such states for relatively large system sizes.

We finish this Appendix by briefly discussing purification of maximally mixed symmetric MPDOs into matrix product states. Any MPDO $\rho$ can be locally purified to an MPS $\ket{\psi}$ such that $\rho = \Tr_A \ket{\psi} \bra{\psi}$, where the trace is taken over a set of local ancilla qudits as illustrated in Fig.~\ref{fig:MPDO}. The purified representation of a density matrix has some advantages. For example, the positivity of the density matrix is satisfied in a local way \cite{cirac_MPDO_first}. One natural question is to ask how the minimal bond dimension of the purified state, $D'$ (also known as purification rank), can be bounded by the MPDO bond dimension $D$. It is shown in Ref.~\cite{cirac_MPDO_BondDim} by a concrete example that there exist some MDPOs with small $D$ but unboundedly large $D'$. This shows that small MPDO bond dimension does not necessarily imply small $D'$. Here we ask the same question for maximally mixed symmetric states: how is the purification rank of such states related to their MDPO bond dimension discussed earlier? Ref.~\cite{cirac_MPDO_BondDim} also provides the following inequality which is true for any MPDO and its purifications. \begin{eqnarray}
    \sqrt{D} \leq D' \leq \frac{D^m - 1}{D - 1},
\end{eqnarray}
where $m$ is the number of different eigenvalues of $\rho$ (including zeros). For a general density matrix, $m$ can be as large as Hilbert space dimension, leading to a loose upper bound for $D'$. However, for maximally mixed symmetric states, $m = 2$, since the density matrix is proportional to a projection operator. Therefore,
\begin{eqnarray}
    \sqrt{D} \leq D' \leq D+1.
\end{eqnarray}

There is also a more direct way to see why the second inequality above has to be true for states $\rho$ that are equiprobable mixtures of orthogonal pure states, i.e. 
\begin{eqnarray}
    \rho = \frac{1}{N} \sum_i \ket{e_i} \bra{e_i}.
\end{eqnarray}
We note again that the MPDO bond dimension $D$ is the bond dimension of the double state $| \rho ) = \frac{1}{\sqrt{N}} \sum \ket{e_i} \ket{e_i}$. Such $| \rho )$, however, is also a valid purification of the density matrix $\rho$, since $\Tr_A  | \rho ) ( \rho | = \rho$. This immediately implies $D' \leq D$.

Applying the results to the cases we have studied, we find that for the case of continuous non-Abelian groups the purification rank scales no lower than polynomial and for discrete non-Abelian groups, there always exists a purification of finite rank not scaling with the system size.

\bibliography{resubmit.bib}

\end{document}